\newcommand{\norm}[1]{\left\| {#1} \right\|}
\newcommand{\pd}{\partial}
\newcommand{\dif}{\, {\rm d}}
\newcommand{\vect}[1]{\boldsymbol{#1}}
\newcommand{\brac}[1]{\left( #1 \right)}
\title{Energy stable and momentum conserving hybrid finite element method
for the incompressible Navier--Stokes equations}
\author{Robert Jan Labeur\thanks{Faculty of Civil Engineering and Geosciences,
                                 Delft University of Technology,
                                 Stevinweg 1,
                                 2628 CN Delft,
                                 The Netherlands
                                 ({\tt r.j.labeur@tudelft.nl})}
\and Garth N. Wells\thanks{Department of Engineering,
                           University of Cambridge,
                           Trumpington Street,
                           Cambridge CB2 1PZ,
                           United Kingdom
                           ({\tt gnw20@cam.ac.uk})}}
\begin{document}
\maketitle
\begin{abstract}
A hybrid method for the incompressible Navier--Stokes equations is
presented. The method inherits the attractive stabilizing mechanism of
upwinded discontinuous Galerkin methods when momentum advection becomes
significant, equal-order interpolations can be used for the velocity
and pressure fields, and mass can be conserved locally.  Using
continuous Lagrange multiplier spaces to enforce flux continuity across
cell facets, the number of global degrees of freedom is the same as for a
continuous Galerkin method on the same mesh.  Different from our earlier
investigations on the approach for the Navier--Stokes equations, the
pressure field in this work is discontinuous across cell boundaries. It
is shown that this leads to very good local mass conservation and, for
an appropriate choice of finite element spaces, momentum conservation.
Also, a new form of the momentum transport terms for the method is
constructed such that global energy stability is guaranteed, even in the
absence of a point-wise solenoidal velocity field.  Mass conservation,
momentum conservation and global energy stability are proved for the
time-continuous case, and for a fully discrete scheme.  The presented
analysis results are supported by a range of numerical simulations.
\end{abstract}

\begin{keywords}
  Finite element method; hybrid finite element methods; incompressible
  Navier--Stokes equations.
\end{keywords}

\begin{AMS}
  65N12, 65N30, 76D05, 76D07.
\end{AMS}

\section{Introduction}

A method that combines attractive features of discontinuous and continuous
Galerkin finite element methods for the incompressible Navier Stokes
equations was presented in \citet{Labeur:2007} and further extended to
the case of moving domains and free-surface flows in \citet{Labeur:2009}.
The method incorporated naturally the evaluation of upwinded advective
fluxes on cell facets, in the same spirit as discontinuous Galerkin
methods, thereby stabilizing flows with significant momentum advection,
and it is possible to use equal-order polynomial bases for the velocity
and pressure components.  However, the number of global degrees of
freedom on a given mesh is the same as for a continuous Galerkin method
using the same polynomial orders.  The issue regarding the significantly
greater number of global degrees of freedom for low- to moderate-order
discontinuous Galerkin methods compared to continuous Galerkin methods
is thus circumvented. However, the method in \citep{Labeur:2007} was
restricted to continuous pressure fields, and it could not be proven that
the method is globally energy stable.  These short-comings are addressed
in this paper, with a formulation presented that permits discontinuous
pressure fields, is globally energy stable, conserves momentum and
has excellent local mass conservation properties.

The key to the methodology that we present for constructing schemes
is the postulation of cell-wise balances, subject to weakly enforced
boundary conditions.  The boundary condition to be satisfied (weakly)
is provided by a function that lives on cell facets only.
An equation for this extra field is furnished
by insisting on weak continuity of the associated `numerical' flux.
The concept of weak enforcement of flux continuity across cell facets
is central in hybridized finite element methods (for an overview see
\citep{Cockburn:2009c}).  A feature of these methods is that functions
on cells are linked to functions on neighboring cells \emph{only} via
functions that live on cell facets, and not directly via the flux terms.
Therefore, functions on cells can be eliminated locally in favor of
the functions that live on cell facets only (via static condensation),
thus reducing the number of globally coupled degrees of freedom.  If the
functions enforcing the continuity of the fluxes, and which live only
on cell facts, are discontinuous, then point-wise continuous fluxes
can be obtained for suitably chosen function spaces.  In contrast, in
our method we advocate the use of facet functions that are continuous,
which leads to a significant reduction in the number of globally coupled
degrees of freedom, since the local elimination procedure will lead to
a global problem of the same size as a corresponding continuous Galerkin
method.  Yet, it is straightforward to demonstrate local momentum and
mass conservation, in terms of the numerical fluxes, as is typical of
discontinuous Galerkin methods.  Also the stabilizing mechanism of the
flux formulation, involving the advection terms and the pressure-velocity
coupling, are directly inherited from the discontinuous Galerkin method
and lead to favorable stability properties.

We are not alone in considering methods that draw on both discontinuous
and continuous Galerkin methods. \citet{Hughes:2006} developed a
method for the advection--diffusion equation, and the formulation in
\citep{Labeur:2007} for the advection-diffusion equation reduces to
that of \citet{Hughes:2006} in the advective limit. In the diffusive
case, there is a subtle difference, with the diffusive flux in
\citet{Hughes:2006} being upwinded, whereas a centered approach is
used in \citet{Labeur:2007}.  Simulations using the approach for
the advection--diffusion equation exhibited very good stability
properties, minimal dissipation and standard convergence rates.
For the case of the linear advection--diffusion-reaction equation,
stability (via an inf--sup condition) and convergence at a rate
of $k+1$ in the diffusive limit and $k+1/2$ in the advective limit
was later proved~\citep{Wells:2010}.  In the context of hybridized
methods, Cockburn and co-workers have published a number of works
(e.g.~\citep{cockburn:2009b,nguyen:2010}) that share features with the
methodology that we consider. A hybrid field on cell interfaces is
presented in~\citet{Egger:2010} for the advection--diffusion problem.
\citet{Guzey:2007} present a hybrid continuous-discontinuous finite
element method for elliptic problems, coined embedded discontinuous
Galerkin method, which is conceptually related to the method
in \citet{Labeur:2007}.

The method formulated and analyzed in this work is an extension of the
method presented in \citet{Labeur:2007} for the advection--diffusion
equation and for the incompressible Navier--Stokes equations.  Unlike in
our previous efforts \citep{Labeur:2007,Labeur:2009}, we consider here
pressure fields that are discontinuous across cell facets.  The impact of
this on the local (cell-wise) mass conservation properties of the method
is demonstrated.  Another feature that distinguishes the formulation
developed in this work from our earlier work for the Navier--Stokes
equations is the use of a skew-symmetric form of the advective term. The
derivation of the skew-symmetric formulation is not trivial in the
considered setting, but it is shown that it brings the advantage of
guaranteeing stability in terms of the total kinetic energy, even when
the velocity field is not point-wise solenoidal. The combination of
discontinuous pressure fields and skew-symmetric advection terms leads to
a method that for equal-order basis functions preserves mass and momentum
and is also stable in terms of the total kinetic energy.  The analysis
results that we present are supported by a number of computations for both
the Stokes and incompressible Navier--Stokes equations.  The computer
code necessary to reproduce all examples presented in this work is
available in the supporting material \citep{supporting_material} under
a GNU public license.

The remainder of this work is structured as follows. We first define
concretely the problem of interest, and then develop a semi-discrete
finite element formulation. Some properties of the semi-discrete problem
are then analyzed. This is followed by a particular fully-discrete
formulation, and it is shown that the considered properties of the
semi-discrete problem are inherited by the fully discrete problem. This
is followed by numerical examples, after which conclusions are drawn.
\section{Incompressible Navier--Stokes equations}
\label{sec:method}

We consider a domain of interest $\Omega \subset \mathbb{R}^{d}$, where
$1 \le d \le 3$ is the spatial dimension.  The boundary $\pd \Omega$ is
assumed sufficiently smooth and the outward unit normal vector on $\pd
\Omega$ is denoted by $\vect{n}$. The boundary is partitioned such that
$\Gamma_{D} \cap \Gamma_{N} = \pd\Omega$ and $\Gamma_{D} \cup \Gamma_{N} =
\emptyset$. The time interval of interest is $I = \left(0, t_{N} \right]$.

The non-dimensional incompressible Navier--Stokes problem on
$\Omega \times I$ reads: given the viscosity $\nu$, the forcing term
$\vect{f} \colon \Omega \times I \rightarrow \mathbb{R}^{d}$, the momentum
flux $\vect{h} \colon \Gamma_{N} \times I \rightarrow \mathbb{R}^{d}$ and
the solenoidal initial condition $\vect{u}_{0} \colon \Omega \rightarrow
\mathbb{R}^{d}$, find the velocity field $\vect{u} \colon \Omega \times
I \rightarrow \mathbb{R}^{d}$ and the pressure field $p \colon \Omega
\times I \rightarrow \mathbb{R}$ such that
\begin{align}
  & \frac{\pd \vect{u}}{\pd t} + \nabla \cdot \vect{\sigma} = \vect{f}
      \quad \text{on} \ \Omega \times I,
   \label{eqn:strong_momentum_eqn}
\\
  & \vect{\sigma} = p \vect{I} - 2 \nu \nabla^{s} \vect{u} + \vect{u} \otimes \vect{u}
   \quad \text{on} \ \Omega \times I,
\\
  & \nabla \cdot \vect{u} = 0 \quad \text{on} \ \Omega \times I,
   \label{eqn:inc_constraint}
\\
  & \vect{u} = \vect{0} \quad \text{on} \ \Gamma_{D} \times I,
\\
  & \vect{\sigma} \vect{n} - \max \brac{\vect{u} \cdot \vect{n}, 0} \vect{u}
  = \vect{h} \quad \text{on} \ \Gamma_{N} \times I,
   \label{eqn:NeumannBC}
\\
  & \vect{u}\brac{\vect{x}, 0}=\vect{u}_{0}\brac{\vect{x}} \quad \text{on} \ \Omega,
\end{align}
where $\vect{\sigma}$ is the momentum flux, $\vect{I}$ is the identity tensor,
$\nabla^{s} \vect{u} = \brac{\nabla \vect{u} + \nabla \vect{u}^{T}}/2$
is the symmetric gradient, in which
$[\nabla \vect{u}]_{ij} = \pd u_i /\pd x_j$, and
$[\vect{u} \otimes \vect{u}]_{ij} = u_{i}u_{j}$ .
The Neumann boundary condition has been formulated such that on portions
of $\Gamma_{N}$ on which $\vect{u} \cdot \vect{n} < 0$ (inflow boundaries)
the total momentum flux is prescribed, while on portions of $\Gamma_{N}$
on which $\vect{u} \cdot \vect{n} \ge 0$ (outflow boundaries) only the
diffusive part of the momentum flux is prescribed.
\section{Finite element method}

The hybrid finite element method is defined in this section. The essence
of the method is posing all balance equations cell-wise in a weak sense,
with a suitably constructed numerical flux, and complementing the
cell-wise balance laws by a global equation enforcing weak continuity
of the numerical flux across cell facets.

\subsection{Definitions}
\label{sec:definitions}

We consider a triangulation $\mathcal{T}$ of the domain $\Omega$ into
open, non-overlapping sub-domains $K$ (cells).  The outward unit normal
vector on the boundary $\pd K$ of each cell is denoted by $\vect{n}$.
Adjacent cells share a common facet $F$, and $\mathcal{F}= \bigcup F$
is the union of all facets, including the exterior boundary facets.
A measure of the size of a cell $K$ is denoted by $h_{K}$. When evaluated
on a shared facet, $h_{K}$ is used to imply the average cell size measure
of the adjacent cells.

Consider first the vector finite element spaces $V_h$ and $\Bar{V}_h$:
\begin{align}
   & V_{h} \colonequals \left \lbrace \vect{v}_h \in \left[ L^{2} \brac{\mathcal{T}} \right]^{d},
   \vect{v}_h \in \left[ P_{k} \brac{K} \right]^{d} \ \forall \ K \in \mathcal{T}\right\rbrace,
   \label{eqn:def-V}
\\
   & \Bar{V}_{h}
   \colonequals
   \left\lbrace \Bar{\vect{v}}_h \in \left[ L^{2} \brac{\mathcal{F}} \right]^{d},
   \; \Bar{\vect{v}}_h \in \left[ P_{\Bar{k}} \brac{F} \right]^{d} \ \forall \ F \in \mathcal{F},
   \; \Bar{\vect{v}}_h = \vect{0} \ \text{on} \ \Gamma_{D} \right\rbrace,
   \label{eqn:def-Bar-V}
\end{align}
where $P_{k}(K)$ denotes the space of Lagrange polynomials on $K$
of order $k > 0$, and $P_{\Bar{k}}(F)$ denotes the space of
Lagrange polynomials on $F$ of order $\Bar{k} \geq 0$.  The space $V_{h}$
contains vector-valued functions that are discontinuous across cell
boundaries, while functions in $\Bar{V}_h$
are defined on cell facets only.  Furthermore, functions in
$\Bar{V}_h$ satisfy the homogeneous Dirichlet boundary condition
on~$\Gamma_D$.  Scalar finite element spaces $Q_h$ and~$\Bar{Q}_h$
are defined by:
\begin{align}
  &  Q_{h} \colonequals \left \lbrace q_h \in L^{2} \brac{\mathcal{T}},
     q_h \in P_{m} \brac{K} \ \forall \ K \in \mathcal{T} \right\rbrace,
\\
   & \Bar{Q}_{h} \colonequals \left \lbrace \Bar{q}_h \in L^{2}\brac{\mathcal{F}},
     \Bar{q}_h \in P_{\Bar{m}} \brac{F} \ \forall \ F \in \mathcal{F} \right\rbrace,
   \label{eqn:def-Bar-Q}
\end{align}
where the polynomial orders $m \geq 0$ and $\Bar{m} \geq 0$.  Mirroring
the vector spaces, $Q_h$ contains functions that are discontinuous across
cell facets, while functions in $\Bar{Q}_h$ are defined on cell facets only.

For algorithmic reasons, it may be advantageous to compute with the finite
element spaces
\begin{align}
   & \Bar{V}^{\star}_{h}
   \colonequals \Bar{V}_{h} \cap \left[ H^{1}(\mathcal{F}) \right]^{d},
\\
   & \Bar{Q}^{\star}_{h} \colonequals \Bar{Q}_{h} \cap H^{1}(\mathcal{F}),
\end{align}
in place of $\Bar{V}_{h}$ and $\Bar{Q}_{h}$, respectively,
using polynomial orders $\Bar{k} \geq 1$ and $\Bar{m} \geq 1$.
This will be discussed in
Section~\ref{sec:algorithmic-aspects}, and all computational
results presented in Section~\ref{sec:examples} will employ facet
functions that are continuous.
\subsection{Semi-discrete weak local/global balances}

We formulate now a semi-discrete finite element problem by considering
what we will refer to as `local' and `global' equations.  The `local'
equations solve the problem cell-wise in which the velocity and pressure
boundary conditions are provided by auxiliary fields that live on cell
facets only. To determine the fields that live on cell facets, `global'
equations are formulated by requiring weak continuity of the mass and
momentum fluxes across element interfaces. The methodology behind the
construction of the formulation is elucidated by presenting a collection
of Galerkin problems for the various balances, after which the considered
Galerkin finite element problem is completely and formally defined.

\subsubsection{Local/global continuity equation}
\label{sec:cont_eq}

A Galerkin approximation of the incompressibility
constraint~\eqref{eqn:inc_constraint} in a cell-wise fashion requires
that the approximate velocity $\vect{u}_{h} \in V_h$ satisfies
\begin{equation}
  \sum_{K}\int_{K} \vect{u}_{h} \cdot \nabla q_{h} \dif x
 -  \sum_{K} \int_{\pd K} \Hat{\vect{u}}_{h} \cdot \vect{n} \; q_{h} \dif s
 = 0
 \quad \forall \ q_{h} \in Q_{h},
\label{eqn:weak_incomp}
\end{equation}
where $\Hat{\vect{u}}_{h}$ is the `numerical' mass flux on $\pd K$,
and is chosen to be
\begin{equation}
  \Hat{\vect{u}}_{h} =  \vect{u}_{h} - \frac{\beta h_{K}}{\nu + 1} \brac{\Bar{p}_{h} - p_{h}} \vect{n},
\label{eqn:def_uhat}
\end{equation}
in which $p_{h} \in Q_{h}$ and $\Bar{p}_{h} \in \Bar{Q}_{h}$ are pressure
fields and $\beta > 0$ is a parameter required for stability when using
equal-order basis functions for the velocity components and pressure
fields. When using a lower polynomial order for the pressure
field relative to the velocity field it is possible to use $\beta =
0$~\citep{hansbo:2002}.  Penalization of the pressure jump was used by
\citet{hughes:1987} to stabilize equal-order methods with discontinuous
pressure for the Stokes equation, and by other authors for discontinuous
Galerkin methods \citep{cockburn:2002,cockburn:2009}. However,
different from \citet{hughes:1987}, we add a non-dimensional unit
viscosity to the term in the denominator to permit consideration of
the inviscid limit.  Central in~\eqref{eqn:def_uhat} is that the
pressure-stabilizing term involves the difference between $p_{h}$
and $\Bar{p}_{h}$, rather than the jump in $p_{h}$ across a facet
as in other works \citep{hughes:1987,cockburn:2002,cockburn:2009}.
Equation~\eqref{eqn:weak_incomp} is local in the sense that there is no
direct interaction between $p_{h}$ on neighboring cells. This is a key
feature of the method with practical implications that will be elaborated
upon in Section~\ref{sec:algorithmic-aspects}.

The numerical mass flux in~\eqref{eqn:def_uhat} is not unique on cell
facets; it can take on different values on different sides of a facet.
This is in contrast with standard discontinuous Galerkin methods, in
which the numerical mass flux is constructed such that it is uniquely
defined on facets.  A `global' continuity equation is now furnished
by insisting that the numerical mass flux $\Hat{\vect{u}}_h$
be weakly continuous across cell facets, in that it satisfies
\begin{equation}
  \sum_{K}\int_{\pd K} \Hat{\vect{u}}_{h} \cdot \vect{n}  \; \Bar{q}_{h} \dif s
 -  \int_{\pd \Omega} \Bar{\vect{u}}_{h} \cdot \vect{n}  \; \Bar{q}_{h} \dif s
      = 0 \quad \forall \ \Bar{q}_{h} \in \Bar{Q}_{h},
\label{eqn:global_incomp}
\end{equation}
where $\Bar{\vect{u}}_h \in \Bar{V}_h$.  Note that
\eqref{eqn:global_incomp} implies that $\Hat{\vect{u}}_{h} \cdot \vect{n}
= \Bar{\vect{u}}_{h} \cdot \vect{n}$ (weakly) on $\pd \Omega$.

\subsubsection{Local/global momentum balance in conservative form}
\label{sec:cons_mom_eq}
At time $t$ and given the forcing term $\vect{f} \in \left[ L^{2}
\brac{\mathcal{T}} \right]^{d}$, the viscosity $\nu$, the velocity
$\Bar{\vect{u}}_{h} \in \Bar{V}_{h}$, and pressures $p_{h} \in Q_{h}$
and $\Bar{p}_{h} \in \Bar{Q}_{h}$, consider a Galerkin approximation
$\vect{u}_{h} \in V_{h}$ that satisfies the following weak formulation
of the momentum balance~\eqref{eqn:strong_momentum_eqn}:
\begin{multline}
   \int_{\Omega} \frac{\pd \vect{u}_{h}}{\pd t} \cdot \vect{v}_{h} \dif x
  - \sum_{K}\int_{K} \vect{\sigma}_{h} \colon \nabla \vect{v}_{h} \dif x
  + \sum_{K} \int_{\pd K} \Hat{\vect{\sigma}}_{h} \vect{n} \cdot \vect{v}_{h} \dif s
\\
  + \sum_{K} \int_{\pd K} 2\nu \brac{\Bar{\vect{u}}_{h} - \vect{u}_{h}}
    \cdot \nabla^{s} \vect{v}_{h}  \, \vect{n} \dif s
   = \int_{\Omega} \vect{f} \cdot \vect{v}_{h} \dif x
   \quad \forall \ \vect{v}_{h} \in V_{h},
\label{eqn:local_momentum_eqn_conservative}
\end{multline}
where the momentum flux $\vect{\sigma}_{h}$ on cells is given by
\begin{equation}
 \vect{\sigma}_{h} = \vect{\sigma}\brac{\vect{u}_{h}, p_{h}}
   = p_{h} \vect{I} - 2 \nu \nabla^{s} \vect{u}_{h}
   + \vect{u}_{h} \otimes \vect{u}_{h},
 \label{eqn:def_mom_flux}
\end{equation}
and the `numerical' momentum flux $\Hat{\vect{\sigma}}_{h}$ on cell boundaries
is given by
\begin{equation}
   \Hat{\vect{\sigma}}_{h} = \Hat{\vect{\sigma}}_{a, h} + \Hat{\vect{\sigma}}_{d, h},
\label{eqn:int_flux}
\end{equation}
where the advective flux $\Hat{\vect{\sigma}}_{a, h}$ is
\begin{equation}
   \Hat{\vect{\sigma}}_{a, h}
   = \Hat{\vect{\sigma}}_{a}\brac{\vect{u}_{h}, \Bar{\vect{u}}_{h}, p_{h}, \Bar{p}_{h}}
    = \vect{u}_{h} \otimes \Hat{\vect{u}}_{h}
     + \brac{\Bar{\vect{u}}_{h} - \vect{u}_{h}} \otimes \lambda \Hat{\vect{u}}_{h},
\label{eqn:adv_int_flux}
\end{equation}
in which $\Hat{\vect{u}}_{h}$ is given by~\eqref{eqn:def_uhat},
$\lambda$ is a function that takes on a value of either one or zero
and is defined below, and the diffusive flux $\Hat{\vect{\sigma}}_{d, h}$ is
\begin{equation}
  \Hat{\vect{\sigma}}_{d, h}
  = \Hat{\vect{\sigma}}_{d}\brac{\vect{u}_{h}, \Bar{\vect{u}}_{h}, \Bar{p}_{h}}
       = \Bar{p}_{h} \vect{I} - 2\nu \nabla^{s} \vect{u}_{h}
                    - \frac{\alpha}{h_{K}} 2 \nu
             \brac{\Bar{\vect{u}}_{h} - \vect{u}_{h} } \otimes \vect{n}.
  \label{eqn:diff_int_flux}
\end{equation}
The function $\lambda$ takes on a value of one on inflow cell
boundaries (where $\Hat{\vect{u}} \cdot \vect{n} < 0$), and takes
on a value of zero on outflow cell boundaries (where $\Hat{\vect{u}} \cdot
\vect{n} \geq 0$).  The formulation for the advective interface flux
involves upwinding since $\Hat{\vect{\sigma}}_{a}=\vect{u} \otimes
\Hat{\vect{u}}$ on outflow cell boundaries and $\Hat{\vect{\sigma}}_{a}
= \Bar{\vect{u}} \otimes \Hat{\vect{u}}$ on inflow cell boundaries.
In \eqref{eqn:local_momentum_eqn_conservative}, the fourth term
on the left-hand side ensures symmetry of the diffusion operator
(see~\citet{Arnold:2002}).  In~\eqref{eqn:diff_int_flux},
$\alpha$ is a penalty parameter, and such a term is typical
of interior penalty methods.  Just as for standard interior
penalty methods, the role of the penalty term in this context
is to ensure stability, as detailed in~\citet{Wells:2010}.
Equation~\eqref{eqn:local_momentum_eqn_conservative} is `local' in the
sense that the weak momentum balance equation is posed cell-wise.

As with the numerical mass flux~\eqref{eqn:def_uhat}, the numerical
momentum flux~\eqref{eqn:int_flux} is not single-valued on cell facets. A
`global' momentum balance equation is therefore formulated by insisting on
continuity of the numerical flux $\Hat{\vect{\sigma}}_{h}$ across element
facets.  This continuity constraint is imposed weakly by requiring that,
for a given flux boundary condition
$\vect{h} \in \left[ L^2\brac{\Gamma_N} \right]^d$,
\begin{multline}
  \sum_{K} \int_{\pd K} \Hat{\vect{\sigma}}_{h} \vect{n} \cdot \Bar{\vect{v}}_{h} \dif s
  =
  \int_{\Gamma_{N}} \brac{1 - \lambda}
  \brac{\Bar{\vect{u}}_{h} \otimes \Bar{\vect{u}}_{h}} \vect{n} \cdot \Bar{\vect{v}}_{h} \dif s
\\
+ \int_{\Gamma_{N}} \vect{h} \cdot \Bar{\vect{v}}_{h}  \dif s
  \quad \forall \ \Bar{\vect{v}}_{h} \in \Bar{V}_{h}.
\label{eqn:flux_continuity_cons}
\end{multline}
The above equation implies that the numerical momentum flux
$\Hat{\vect{\sigma}}_{h} \vect{n}$ and the momentum flux on $\Gamma_{N}$,
given by $\vect{h} + \brac{1 - \lambda} \brac{\Bar{\vect{u}}_{h} \otimes
\Bar{\vect{u}}_{h}} \vect{n}$, coincide in a weak sense.

\subsubsection{Local/global momentum balance in advective form}
\label{sec:adv_mom_eq}

We now rephrase the conservative forms of the local and global momentum
balance equations into advective formats with a view to formulating a
skew-symmetric version of the advective terms.

Considering first the local momentum equation,
substitution of the fluxes~\eqref{eqn:def_mom_flux},
\eqref{eqn:int_flux} and~\eqref{eqn:adv_int_flux}
into~\eqref{eqn:local_momentum_eqn_conservative} yields
\begin{multline}
   \int_{\Omega} \frac{\pd \vect{u}_{h}}{\pd t} \cdot \vect{v}_{h} \dif x
  - \sum_{K}\int_{K} \vect{\sigma}_{d, h} \colon \nabla \vect{v}_{h} \dif x
  - \sum_{K}\int_{K} \brac{\vect{u}_{h} \otimes \vect{u}_{h}} \colon \nabla \vect{v}_{h} \dif x
\\
  + \sum_{K} \int_{\pd K} \Hat{\vect{\sigma}}_{d, h} \vect{n} \cdot \vect{v}_{h} \dif s
  + \sum_{K} \int_{\pd K} \brac{\Hat{\vect{u}}_{h} \cdot \vect{n}} \,
     \vect{u}_{h} \cdot \vect{v}_{h} \dif s
\\
  + \sum_{K} \int_{\pd K} \lambda \brac{\Hat{\vect{u}}_{h} \cdot \vect{n}} \,
             \brac{\Bar{\vect{u}}_{h} - \vect{u}_{h}} \cdot \vect{v}_{h} \dif s
\\
  + \sum_{K} \int_{\pd K} 2\nu \brac{\Bar{\vect{u}}_{h} - \vect{u}_{h}}
    \cdot \nabla^{s} \vect{v}_{h}  \, \vect{n} \dif s
  = \int_{\Omega} \vect{f} \cdot \vect{v}_{h} \dif x,
\label{eqn:cons_mom_eq}
\end{multline}
in which $\vect{\sigma}_{d, h} = p_{h}\vect{I} - 2\nu \nabla^s
\vect{u}_{h}$ is the diffusive flux on cells. Applying partial integration
to the advective terms on $K$ in~\eqref{eqn:cons_mom_eq},
\begin{multline}
   \int_{\Omega} \frac{\pd \vect{u}_{h}}{\pd t} \cdot \vect{v}_{h} \dif x
  - \sum_{K}\int_{K} \vect{\sigma}_{d, h} \colon \nabla \vect{v}_{h} \dif x
  + \sum_{K}\int_{K} \brac{\nabla \vect{u}_{h} \, \vect{u}_{h}} \cdot \vect{v}_{h} \dif x
\\
  + \sum_{K}\int_{K} \brac{\nabla \cdot \vect{u}_{h}} \vect{u}_{h} \cdot \vect{v}_{h} \dif x
  + \sum_{K} \int_{\pd K} \brac{\brac{\Hat{\vect{u}}_{h} - \vect{u}_{h}}
    \cdot \vect{n}} \, \vect{u}_{h} \cdot \vect{v}_{h} \dif s
\\
  + \sum_{K} \int_{\pd K} \Hat{\vect{\sigma}}_{d, h} \vect{n} \cdot \vect{v}_{h} \dif s
  + \sum_{K} \int_{\pd K} \lambda \brac{\Hat{\vect{u}}_{h} \cdot \vect{n}} \,
             \brac{\Bar{\vect{u}}_{h} - \vect{u}_{h}} \cdot \vect{v}_{h} \dif s
\\
  + \sum_{K} \int_{\pd K} 2\nu \brac{\Bar{\vect{u}}_{h} - \vect{u}_{h}}
    \cdot \nabla^{s} \vect{v}_{h}  \, \vect{n} \dif s
  = \int_{\Omega} \vect{f} \cdot \vect{v}_{h} \dif x.
\label{eqn:cons_mom_eq_2}
\end{multline}
We choose to discard the integrals involving $\nabla \cdot \vect{u}_{h}$
and $\brac{\Hat{\vect{u}}_{h} - \vect{u}_{h}} \cdot \vect{n}$, which
by virtue of the continuity equation~\eqref{eqn:weak_incomp} and under
an appropriate regularity assumption on the exact solution will not
disturb consistency of a Galerkin scheme (this will be addressed
formally in Section~\ref{sec:properties}).  A reduced version
of~\eqref{eqn:cons_mom_eq_2} now reads:
\begin{multline}
   \int_{\Omega} \frac{\pd \vect{u}_{h}}{\pd t} \cdot \vect{v}_{h} \dif x
  + \sum_{K}\int_{K} \brac{\nabla \vect{u}_{h} \, \vect{u}_{h}} \cdot \vect{v}_{h} \dif x
  - \sum_{K}\int_{K} \vect{\sigma}_{d, h} \colon \nabla \vect{v}_{h} \dif x
\\
  + \sum_{K} \int_{\pd K} \lambda \brac{\Hat{\vect{u}}_{h} \cdot \vect{n}} \,
             \brac{\Bar{\vect{u}}_{h} - \vect{u}_{h}} \cdot \vect{v}_{h} \dif s
+ \sum_{K} \int_{\pd K} \Hat{\vect{\sigma}}_{d, h} \vect{n} \cdot \vect{v}_{h} \dif s
\\
  + \sum_{K} \int_{\pd K} 2\nu \brac{\Bar{\vect{u}}_{h} - \vect{u}_{h}}
    \cdot \nabla^{s} \vect{v}_{h}  \, \vect{n} \dif s
  = \int_{\Omega} \vect{f} \cdot \vect{v}_{h} \dif x.
 \label{eqn:momentum_advective_form}
\end{multline}

Considering next the global momentum balance,
inserting the expressions for the numerical flux~\eqref{eqn:int_flux}
and~\eqref{eqn:adv_int_flux} into the global flux
continuity equation~\eqref{eqn:flux_continuity_cons} yields,
\begin{multline}
  \sum_{K} \int_{\pd K} \Hat{\vect{\sigma}}_{d, h} \vect{n} \cdot \Bar{\vect{v}}_{h} \dif s
 + \sum_{K} \int_{\pd K}  \brac{\Hat{\vect{u}}_{h} \cdot \vect{n}}
               \vect{u}_{h} \cdot \Bar{\vect{v}}_{h} \dif s
\\
+ \sum_{K} \int_{\pd K}  \lambda \brac{\Hat{\vect{u}}_{h} \cdot \vect{n}}
    \brac{\Bar{\vect{u}}_{h} - \vect{u}_{h}} \cdot \Bar{\vect{v}}_{h} \dif s
\\
-  \int_{\Gamma_{N}} \brac{1 - \lambda} \brac{\Bar{\vect{u}}_{h} \cdot \vect{n}}
    \Bar{\vect{u}}_{h} \cdot \Bar{\vect{v}}_{h} \dif s
= \int_{\Gamma_{N}} \vect{h} \cdot \Bar{\vect{v}}_{h} \dif s.
\label{eqn:global_cons_eq}
\end{multline}
The second integral in~\eqref{eqn:global_cons_eq} can be expanded as
\begin{multline}
  \sum_{K} \int_{\pd K}  \brac{\Hat{\vect{u}}_{h} \cdot \vect{n}}
               \vect{u}_{h} \cdot \Bar{\vect{v}}_{h} \dif s
= \sum_{K} \int_{\pd K}  \brac{\Hat{\vect{u}}_{h} \cdot \vect{n}}
               \brac{\vect{u}_{h} - \Bar{\vect{u}}_{h}} \cdot \Bar{\vect{v}}_{h} \dif s \\
 +  \sum_{K} \int_{\pd K}  \brac{\brac{\Hat{\vect{u}}_{h} - \Bar{\vect{u}}_{h}} \cdot \vect{n} }
               \Bar{\vect{u}}_{h} \cdot \Bar{\vect{v}}_{h} \dif s
  + \int_{\pd\Omega} \brac{\Bar{\vect{u}}_{h} \cdot \vect{n}}
    \Bar{\vect{u}}_{h} \cdot \Bar{\vect{v}}_{h} \dif s,
 \label{eqn:global_intermediate_result}
\end{multline}
where we have used that $\Bar{\vect{u}}_h$ is single-valued on cell
facets, by definition. Discarding the term involving
$\brac{\Hat{\vect{u}}_{h} - \Bar{\vect{u}}_{h}} \cdot \vect{n}$, which is
consistent with continuity equation~\eqref{eqn:global_incomp},
and substituting~\eqref{eqn:global_intermediate_result}
into~\eqref{eqn:global_cons_eq} leads to the following advective
form of the global momentum equation:
\begin{multline}
  \sum_{K} \int_{\pd K} \Hat{\vect{\sigma}}_{d, h} \vect{n} \cdot \Bar{\vect{v}}_{h} \dif s
 - \sum_{K} \int_{\pd K} \brac{1 - \lambda} \brac{\Hat{\vect{u}}_{h} \cdot \vect{n}}
               \brac{\Bar{\vect{u}}_{h} - \vect{u}_{h}} \cdot \Bar{\vect{v}}_{h} \dif s
\\
+  \int_{\Gamma_{N}} \lambda \brac{\Bar{\vect{u}}_{h} \cdot \vect{n}}
    \Bar{\vect{u}}_{h} \cdot \Bar{\vect{v}}_{h} \dif s
= \int_{\Gamma_{N}} \vect{h} \cdot \Bar{\vect{v}}_{h} \dif s,
\label{eqn:global_adv_momentum_eq}
\end{multline}
where we have taken into account that $\Bar{\vect{v}}_{h} = \vect{0}$
on~$\pd\Omega \backslash \Gamma_{N}$.

\subsection{Semi-discrete finite element formulation}
\label{sec:combined_method}

We define now a collection of functionals that together will define
a complete finite element problem. For convenience, the  notation
$\vect{U} \colonequals \brac{\vect{u}, \Bar{\vect{u}}, p, \Bar{p}}$
will be used.

Based on the local continuity equation~\eqref{eqn:weak_incomp}, we define
the functional:
\begin{equation}
  F_{\rm c}(\vect{U}; q)
    \colonequals
  \sum_{K}\int_{K} \vect{u} \cdot \nabla q \dif x
  -  \sum_{K} \int_{\pd K} \Hat{\vect{u}} \cdot \vect{n} \; q \dif s,
\label{eqn:local_weak_continuity_L}
\end{equation}
where $F_{\rm c}$ is linear in $q$. Similarly, from the global continuity
equation~\eqref{eqn:global_incomp}, the functional
\begin{equation}
  \Bar{F}_{\rm c}(\vect{U}; \Bar{q})
  \colonequals
  \sum_{K}\int_{\pd K} \Hat{\vect{u}} \cdot \vect{n} \; \Bar{q} \dif s
 -  \int_{\pd \Omega} \Bar{\vect{u}} \cdot \vect{n} \; \Bar{q} \dif s,
\label{eqn:global_weak_continuity_L}
\end{equation}
is defined, where $\Bar{F}_{\rm c}$ is linear in $\Bar{q}$.
For the momentum equations, we define a local momentum
balance functional that is a weighted combination of the local
conservative balance~\eqref{eqn:cons_mom_eq} and the local advective
balance~\eqref{eqn:momentum_advective_form}:
\begin{multline}
 F_{\rm m}(\vect{U}; \vect{v})
  \colonequals
    \int_{\Omega} \frac{\pd \vect{u}}{\pd t} \cdot \vect{v} \dif x
  - \chi \sum_{K}\int_{K} \brac{\vect{u} \otimes \vect{u}} \colon \nabla \vect{v} \dif x
\\
  + \brac{1-\chi}  \sum_{K}\int_{K} \brac{\nabla \vect{u} \, \vect{u}} \cdot \vect{v} \dif x
  + \chi \sum_{K} \int_{\pd K} \brac{\Hat{\vect{u}} \cdot \vect{n}} \, \vect{u} \cdot \vect{v} \dif s
\\
  + \sum_{K} \int_{\pd K} \lambda \brac{\Hat{\vect{u}} \cdot \vect{n}} \,
             \brac{\Bar{\vect{u}} - \vect{u}} \cdot \vect{v} \dif s
  - \sum_{K}\int_{K} \vect{\sigma}_{d} \colon \nabla \vect{v} \dif x
\\
  + \sum_{K} \int_{\pd K} \Hat{\vect{\sigma}}_{d} \vect{n} \cdot \vect{v} \dif s
  + \sum_{K} \int_{\pd K} 2\nu \brac{\Bar{\vect{u}}-\vect{u}}
    \cdot \nabla^{s} \vect{v}  \, \vect{n} \dif s
  - \int_{\Omega} \vect{f} \cdot \vect{v} \dif x,
\label{eqn:local_weak_momentum_L}
\end{multline}
where $\chi \in \left[0,1\right]$ and $F_{\rm m}$ is
linear in~$\vect{v}$.  In the same fashion, the global
momentum flux continuity equations~\eqref{eqn:global_cons_eq}
and~\eqref{eqn:global_adv_momentum_eq} are weighted and summed,
leading to,
\begin{multline}
 \Bar{F}_{\rm m}(\vect{U}; \Bar{\vect{v}})
    \colonequals
      \chi \sum_{K} \int_{\pd K}  \brac{\Hat{\vect{u}} \cdot \vect{n}}
               \vect{u} \cdot \Bar{\vect{v}} \dif s
- \brac{1-\chi} \sum_{K} \int_{\pd K}  \brac{\Hat{\vect{u}} \cdot \vect{n}}
               \brac{\Bar{\vect{u}}-\vect{u}} \cdot \Bar{\vect{v}} \dif s
\\
+ \sum_{K} \int_{\pd K}  \lambda \brac{\Hat{\vect{u}} \cdot \vect{n}}
    \brac{\Bar{\vect{u}}-\vect{u}} \cdot \Bar{\vect{v}} \dif s
+  \sum_{K} \int_{\pd K} \Hat{\vect{\sigma}}_{d} \vect{n} \cdot \Bar{\vect{v}} \dif s
\\
-  \int_{\Gamma_{N}} \brac{\chi - \lambda} \brac{\Bar{\vect{u}} \cdot \vect{n}}
    \Bar{\vect{u}} \cdot \Bar{\vect{v}} \dif s
- \int_{\Gamma_{N}} \vect{h} \cdot \Bar{\vect{v}}  \dif s,
\label{eqn:global_weak_momentum_L}
\end{multline}
where $\Bar{F}_{\rm m}$ is linear in $\Bar{\vect{v}}$.

Defining now
\begin{equation}
F(\vect{U}; \vect{W})
 \colonequals
    F_{\rm m}(\vect{U}; \vect{v})
  + \Bar{F}_{\rm m}(\vect{U}; \Bar{\vect{v}})
  + F_{\rm c}(\vect{U}; q)
  + \Bar{F}_{\rm c}(\vect{U}; \Bar{q}),
\end{equation}
where $\vect{W} = \brac{\vect{v}, \Bar{\vect{v}}, q, \Bar{q}}$, a
semi-discrete finite element problem at time $t$ involves:
given the forcing term
$\vect{f} \in \left[L^2 \brac{\Omega} \right]^d$
the boundary condition
$\vect{h} \in \left[L^2 \brac{\Gamma_N} \right]^d$
and the viscosity $\nu$, find $\vect{U}_{h} \in V_{h}
\times \Bar{V}_{h} \times Q_{h} \times \Bar{Q}_{h}$ such that
\begin{equation}
  F(\vect{U}_{h}; \vect{W}_{h}) = 0
  \quad \forall \ \vect{W}_{h} \in V_{h} \times \Bar{V}_{h}
        \times Q_{h} \times \Bar{Q}_{h}.
\label{eqn:fe_problem}
\end{equation}
This completes the formulation of the semi-discrete finite element
problem.
\section{Properties of the semi-discrete formulation}
\label{sec:properties}

We now demonstrate the consistency, mass conservation, momentum
conservation and energy stability properties of the method for the
semi-discrete formulation in~\eqref{eqn:fe_problem}.  The presented
results hold for the spaces $\Bar{V}_{h}$ and $\Bar{Q}_{h}$,
and deliberately also for the more restrictive case in which
$\Bar{V}_{h}$ and $\Bar{Q}_{h}$ are replaced by $\Bar{V}^{\star}_{h}$
and $\Bar{Q}^{\star}_{h}$, respectively, which we advocate in
practice and will use in numerical examples.

\begin{proposition}[consistency]
\label{prop:consistency}
If at a given time $t$, $\vect{u} \in \brac{H^{2}\brac{\Omega}}^{d}$
and $p \in H^{1}\brac{\Omega}$ solve
equations~\eqref{eqn:strong_momentum_eqn}--\eqref{eqn:NeumannBC},
and $\Bar{\vect{u}} = \gamma(\vect{u})$ and $\Bar{p} = \gamma(p)$ on
$\mathcal{F}$, where $\gamma$ is a trace operator,
then
\begin{equation}
  F(\vect{U}; \vect{W}_{h}) = 0
      \quad \forall \ \vect{W}_{h}
          \in V_{h} \times \Bar{V}_{h} \times Q_{h} \times \Bar{Q}_{h},
\label{eqn:consistency}
\end{equation}
for any $\chi \in [0, 1]$.
\end{proposition}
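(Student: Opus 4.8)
The plan is to insert the exact solution into each of the four constituent functionals and show that each vanishes separately. The starting point is a trace observation: since $\vect{u} \in \brac{H^{2}\brac{\Omega}}^{d}$ and $p \in H^{1}\brac{\Omega}$, both fields possess single-valued traces on $\mathcal{F}$, and $\nabla^{s}\vect{u}$ has a well-defined single-valued trace as well. Hence setting $\Bar{\vect{u}} = \gamma(\vect{u})$ and $\Bar{p} = \gamma(p)$ forces $\Bar{\vect{u}} = \vect{u}$ and $\Bar{p} = p$ on every facet. Two simplifications then follow at once: from~\eqref{eqn:def_uhat} the numerical mass flux collapses to $\Hat{\vect{u}} = \vect{u}$ (the $\Bar{p}-p$ term drops), and from~\eqref{eqn:diff_int_flux} the numerical diffusive flux collapses to $\Hat{\vect{\sigma}}_{d} = \vect{\sigma}_{d}$; moreover every interior-penalty and symmetrization term carrying the factor $\brac{\Bar{\vect{u}}-\vect{u}}$ vanishes. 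I would record these reductions first, as they prune most of the terms in~\eqref{eqn:local_weak_momentum_L} and~\eqref{eqn:global_weak_momentum_L}.

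For the continuity functionals this already suffices. In $F_{\rm c}$, integrating $\sum_{K}\int_{K}\vect{u}\cdot\nabla q\dif x$ by parts cell-wise and using $\Hat{\vect{u}}=\vect{u}$ cancels the facet integral and leaves $-\sum_{K}\int_{K}\brac{\nabla\cdot\vect{u}}q\dif x$, which is zero by~\eqref{eqn:inc_constraint}. In $\Bar{F}_{\rm c}$, continuity of $\vect{u}$ across interior facets makes the two contributions from each shared facet cancel (opposite normals, single-valued $\Bar{q}$), leaving only the boundary term, which matches $\int_{\pd\Omega}\Bar{\vect{u}}\cdot\vect{n}\,\Bar{q}\dif s$ exactly. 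Hence $F_{\rm c}=\Bar{F}_{\rm c}=0$.

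The substantive work is in $F_{\rm m}$, where the key point is that the $\chi$-dependence must disappear. I would integrate the conservative volume advection term by parts on each cell, using $\int_{K}\brac{\vect{u}\otimes\vect{u}}\colon\nabla\vect{v}\dif x = -\int_{K}\brac{\nabla\vect{u}\,\vect{u}}\cdot\vect{v}\dif x + \int_{\pd K}\brac{\vect{u}\cdot\vect{n}}\brac{\vect{u}\cdot\vect{v}}\dif s$, where $\nabla\cdot\vect{u}=0$ has been invoked to rewrite the divergence of $\vect{u}\otimes\vect{u}$. Substituting this and the already-simplified fluxes into~\eqref{eqn:local_weak_momentum_L}, the $\chi$-weighted conservative part and the $\brac{1-\chi}$-weighted advective part collapse to the single term $\sum_{K}\int_{K}\brac{\nabla\vect{u}\,\vect{u}}\cdot\vect{v}\dif x$, independent of $\chi$. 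Integrating the remaining diffusive cell terms by parts then leaves $\int_{\Omega}\brac{\pd\vect{u}/\pd t + \nabla\vect{u}\,\vect{u} + \nabla\cdot\vect{\sigma}_{d} - \vect{f}}\cdot\vect{v}\dif x$, which vanishes because the strong momentum balance~\eqref{eqn:strong_momentum_eqn}, after using $\nabla\cdot\brac{\vect{u}\otimes\vect{u}}=\nabla\vect{u}\,\vect{u}$ under incompressibility, is precisely this integrand.

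Finally, for $\Bar{F}_{\rm m}$ all remaining terms live on cell boundaries. Continuity of $\vect{u}$ and of $\vect{\sigma}_{d}\vect{n}$ across interior facets cancels the interior contributions (single-valued $\Bar{\vect{v}}$, opposite normals), and since $\Bar{\vect{v}}=\vect{0}$ on $\pd\Omega\backslash\Gamma_{N}$ only a $\Gamma_{N}$ integral survives. Its advective coefficient simplifies via $\chi-\brac{\chi-\lambda}=\lambda$, leaving $\int_{\Gamma_{N}}\brac{\lambda\brac{\vect{u}\cdot\vect{n}}\vect{u} + \vect{\sigma}_{d}\vect{n} - \vect{h}}\cdot\Bar{\vect{v}}\dif s$. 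I expect the final, and most delicate, step to be verifying that this integrand vanishes through the Neumann condition~\eqref{eqn:NeumannBC}. Writing $\vect{\sigma}\vect{n}=\vect{\sigma}_{d}\vect{n}+\brac{\vect{u}\cdot\vect{n}}\vect{u}$ and splitting into the inflow case ($\vect{u}\cdot\vect{n}<0$, so $\lambda=1$ and $\max\brac{\vect{u}\cdot\vect{n},0}=0$) and the outflow case ($\vect{u}\cdot\vect{n}\ge 0$, so $\lambda=0$ and $\max\brac{\vect{u}\cdot\vect{n},0}=\vect{u}\cdot\vect{n}$), one checks that $\vect{h}=\lambda\brac{\vect{u}\cdot\vect{n}}\vect{u}+\vect{\sigma}_{d}\vect{n}$ in both cases, so the integrand is identically zero. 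The main obstacle is thus this case-split reconciliation of the upwinding indicator $\lambda$ with the $\max$-based flux boundary condition; everything else is integration by parts together with the continuity constraint.
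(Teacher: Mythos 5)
Your proposal is correct and follows essentially the same route as the paper's proof: both decompose $F$ into its four constituent functionals (equivalently, set all but one test function to zero), exploit the flux consistency $\Hat{\vect{u}}=\vect{u}$ and $\Hat{\vect{\sigma}}_{d}=\vect{\sigma}_{d}$ that follows from $\Bar{\vect{u}}=\gamma(\vect{u})$ and $\Bar{p}=\gamma(p)$, and then conclude by cell-wise integration by parts together with the strong equations and the Neumann condition. The paper merely compresses the steps you spell out in detail (the $\chi$-cancellation of the advective terms and the inflow/outflow reconciliation of $\lambda$ with the $\max$-based flux condition), so the underlying argument is identical.
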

\begin{proof}
Considering first $\vect{v}_{h} = \vect{0}$, $\Bar{\vect{v}}_{h} = \vect{0}$
and $\Bar{q}_{h} = 0$, applying integration by parts to \eqref{eqn:consistency}
leads to
\begin{equation}
  \sum_{K}\int_{K}  (\nabla \cdot \vect{u}) \, q_{h} \dif x
 -  \sum_{K} \int_{\pd K} \frac{\beta h_{K}}{1+\nu} \brac{\Bar{p}-p} \, q_{h} \dif s
  = 0 \quad \forall \ q_{h} \in Q_{h},
\label{eqn:consistency_proof_c_local}
\end{equation}
which holds due to satisfaction of \eqref{eqn:inc_constraint} and
$\Bar{p} = \gamma(p)$.  Setting $\vect{v}_{h} = \vect{0}$,
$\Bar{\vect{v}}_{h} = \vect{0}$ and $q_{h} = 0$ in \eqref{eqn:consistency},
\begin{equation}
  \sum_{K}\int_{\pd K \backslash \pd \Omega} \Hat{\vect{u}} \cdot \vect{n}  \, \Bar{q}_{h} \dif s
 +  \int_{\pd \Omega} \Hat{\vect{u}} \cdot \vect{n}  \, \Bar{q}_{h} \dif s
 -  \int_{\pd \Omega} \Bar{\vect{u}} \cdot \vect{n}  \, \Bar{q}_{h} \dif s
 = 0 \quad \forall \ \Bar{q}_{h} \in \Bar{Q}_{h},
\label{eqn:consistency_proof_c_global}
\end{equation}
which holds due to the regularity of $\vect{u}$ and because
$\Hat{\vect{u}} = \gamma (\vect{u})$.

Setting $\Bar{\vect{v}}_{h} = \vect{0}$, $q_{h} = 0$ and $\Bar{q}_{h}
= 0$ in~\eqref{eqn:consistency} and applying integration by parts,
\begin{multline}
   \int_{\Omega} \brac{ \frac{\pd \vect{u}}{\pd t}
   + \nabla \cdot \vect{\sigma} - \vect{f}} \cdot \vect{v}_{h} \dif x
   - \brac{1-\chi} \int_{\Omega} \brac{\nabla \cdot \vect{u}} \vect{u} \cdot \vect{v}_{h} \dif x
\\
   + \chi \sum_K \int_{\pd K} \brac{\brac{\Hat{\vect{u}} - \vect{u}} \cdot \vect{n}}
     \vect{u}\cdot\vect{v}_h \dif s
  + \sum_{K} \int_{\pd K} \lambda \brac{\Hat{\vect{u}} \cdot \vect{n}}
             \brac{\Bar{\vect{u}} - \vect{u}} \cdot \vect{v}_{h} \dif s
\\
  + \sum_K \int_{\pd K} \brac{\Bar{p}-p} \vect{n} \cdot \vect{v}_h \dif s
  - \sum_{K} \int_{\pd K} \frac{\alpha}{h_K} 2 \nu
     \brac{\Bar{\vect{u}} -\vect{u}} \otimes \vect{n} \cdot \vect{v}_{h} \dif s
\\
  + \sum_{K} \int_{\pd K} 2\nu \brac{\Bar{\vect{u}} - \vect{u}} \cdot \nabla^{s} \vect{v}_{h} \vect{n} \dif s
  = 0 \quad \forall \ \vect{v}_{h} \in V_{h},
\label{eqn:consistency_proof_m_local}
\end{multline}
which holds due to $\vect{u}$ and $p$ satisfying
equations~\eqref{eqn:strong_momentum_eqn} and~\eqref{eqn:inc_constraint},
the regularity of~$\vect{u}$ and because $\Bar{p}=\gamma(p)$ and
$\Bar{\vect{u}} = \Hat{\vect{u}} = \gamma(\vect{u})$.  Finally, setting
$\vect{v}_{h} = \vect{0}$, $q_{h} = 0$ and $\Bar{q}_{h} = 0$
in~\eqref{eqn:consistency},
\begin{multline}
 \sum_{K} \int_{\pd K}  \vect{\sigma} \vect{n} \cdot \Bar{\vect{v}}_{h} \dif s
+ \sum_{K} \int_{\pd K}  \brac{\Bar{p}-p} \vect{n} \cdot \Bar{\vect{v}}_{h} \dif s
- \sum_K \int_{\pd K} \frac{\alpha}{h_{K}} 2 \nu \brac{\Bar{\vect{u}}
                 - \vect{u} } \otimes \vect{n} \cdot \Bar{\vect{v}}_h \dif s
\\
+ \sum_{K} \int_{\pd K}  \brac{\brac{\Hat{\vect{u}}-\vect{u}} \cdot \vect{n}}
               \vect{u} \cdot \Bar{\vect{v}}_{h} \dif s
+ \sum_K \int_{\pd K} \lambda \brac{\Hat{\vect{u}} \cdot \vect{n}}
  \brac{\Bar{\vect{u}} - \vect{u}} \cdot \Bar{\vect{v}}_h \dif s
\\
- \brac{1-\chi} \sum_K \int_{\pd K}  \brac{\Hat{\vect{u}} \cdot \vect{n}}
               \Bar{\vect{u}} \cdot \Bar{\vect{v}}_{h} \dif s
-  \int_{\Gamma_{N}} \brac{\chi - \lambda} \brac{\Bar{\vect{u}} \cdot \vect{n}}
    \Bar{\vect{u}} \cdot \Bar{\vect{v}}_{h} \dif s
\\
= \int_{\Gamma_{N}} \vect{h} \cdot \Bar{\vect{v}}_{h}  \dif s
  \quad \forall \ \Bar{\vect{v}} \in \Bar{V}_{h},
\label{eqn:consistency_proof_m_global}
\end{multline}
which holds due to the regularity of~$\vect{u}$,
because $\Bar{p}=\gamma(p)$ and $\Bar{\vect{u}} =
\Hat{\vect{u}} = \gamma(\vect{u})$ and due to satisfaction
of the flux boundary condition~\eqref{eqn:NeumannBC}.
Equation~\eqref{eqn:consistency} follows from the summation of
\eqref{eqn:consistency_proof_c_local}--\eqref{eqn:consistency_proof_m_global}
and the linearity of $F$ in $\vect{v}$, $\Bar{\vect{v}}$, $q$ and
$\Bar{q}$.
\end{proof}

Key to the proof of Proposition~\ref{prop:consistency} is
the consistent formulation of the numerical fluxes, that is
$\Hat{\vect{\sigma}} = \vect{\sigma}$ and $\Hat{\vect{u}} = \vect{u}$
if $\Bar{\vect{u}}=\vect{u}$ and $\Bar{p}=p$.

\begin{proposition}[mass conservation]
\label{prop:mass_conservation}
If $\vect{u}_{h}$ and $\Bar{\vect{u}}_{h}$ satisfy~\eqref{eqn:fe_problem}, then
\begin{equation}
  \int_{\pd K} \Hat{\vect{u}}_{h} \cdot \vect{n} \dif s = 0
    \quad \forall \ K \in \mathcal{T}
\label{eqn:local_mass_conservation}
\end{equation}
and
\begin{equation}
   \int_{\pd \Omega} \Bar{\vect{u}}_{h} \cdot \vect{n}\dif s = 0.
\label{eqn:global_mass_conservation}
\end{equation}
\end{proposition}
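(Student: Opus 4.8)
The plan is to extract the two conservation statements directly from the finite element problem~\eqref{eqn:fe_problem} by making judicious choices of test functions, exploiting the fact that the discontinuous spaces $Q_h$ and $Q_h^\star$... rather, the key observation is that the local mass flux~\eqref{eqn:def_uhat} is defined cell-wise and that constant (or indicator-type) functions lie in the relevant test spaces.

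To obtain the local statement~\eqref{eqn:local_mass_conservation}, I would test the continuity functional $F_{\rm c}(\vect{U}_h; q_h) = 0$ with a function $q_h$ that equals unity on a single cell $K$ and vanishes on all other cells. Since functions in $Q_h$ are discontinuous across cell facets and contain the constants (as $m \ge 0$), such a $q_h$ is admissible. On that cell, $\nabla q_h = \vect{0}$, so the volume integral $\sum_K \int_K \vect{u}_h \cdot \nabla q_h \dif x$ vanishes, and only the boundary term on $\pd K$ survives, yielding $\int_{\pd K} \Hat{\vect{u}}_h \cdot \vect{n} \dif s = 0$. Repeating this for each cell $K \in \mathcal{T}$ establishes~\eqref{eqn:local_mass_conservation}. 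I would expect this to be essentially immediate; the only care needed is to confirm that the piecewise-constant indicator function genuinely lies in $Q_h$, which follows from the definition~\eqref{eqn:local_weak_continuity_L} and $m \ge 0$.

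For the global statement~\eqref{eqn:global_mass_conservation}, I would test the global continuity functional $\Bar{F}_{\rm c}(\vect{U}_h; \Bar{q}_h) = 0$ from~\eqref{eqn:global_weak_continuity_L} with $\Bar{q}_h \equiv 1$ on all of $\mathcal{F}$, which is admissible since $\Bar{m} \ge 0$ (and the constant also lies in $\Bar{Q}_h^\star$). This gives $\sum_K \int_{\pd K} \Hat{\vect{u}}_h \cdot \vect{n} \dif s - \int_{\pd\Omega} \Bar{\vect{u}}_h \cdot \vect{n} \dif s = 0$. The first sum telescopes: on every interior facet the numerical flux $\Hat{\vect{u}}_h \cdot \vect{n}$ appears from both adjacent cells with opposite-signed normals, but these contributions need not cancel since $\Hat{\vect{u}}_h$ is not single-valued. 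Instead I would invoke the already-established local result~\eqref{eqn:local_mass_conservation}: summing $\int_{\pd K} \Hat{\vect{u}}_h \cdot \vect{n} \dif s = 0$ over all $K$ shows the entire first term vanishes, leaving $\int_{\pd\Omega} \Bar{\vect{u}}_h \cdot \vect{n} \dif s = 0$, which is~\eqref{eqn:global_mass_conservation}.

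The main subtlety — and the step I would flag — is the interplay between the two choices. The cleanest logical order is to prove the local statement first and then feed it into the global one, so that I never have to reason about cancellation of the non-single-valued numerical flux across interior facets; the local result makes the interior contributions vanish term-by-term rather than by telescoping. I do not anticipate a genuine obstacle here, as both statements reduce to selecting constants in spaces that are guaranteed to contain them; the argument is purely a matter of choosing test functions and bookkeeping the cell-wise sums.
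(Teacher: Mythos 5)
Your proposal is correct and follows essentially the same route as the paper: both statements are obtained from~\eqref{eqn:fe_problem} by taking zero velocity test functions and constant/indicator test functions in the discontinuous pressure spaces, with the cell-wise indicator killing the volume term and isolating $\int_{\pd K} \Hat{\vect{u}}_h \cdot \vect{n} \dif s$. The only (immaterial) difference is in the global statement: the paper sets $q_h = \Bar{q}_h = 1$ simultaneously, so the interior cell-boundary sums in $F_{\rm c}$ and $\Bar{F}_{\rm c}$ cancel identically without invoking the local result, whereas you set $q_h = 0$, $\Bar{q}_h = 1$ and then feed in the already-proved local identity --- both are valid ways to avoid reasoning about the non-single-valued flux $\Hat{\vect{u}}_h$ on interior facets.
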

\begin{proof}
Setting $\vect{v}_{h} = \Bar{\vect{v}}_{h} = \vect{0}$, $\Bar{q}_{h} = 1$,
and $q_{h} = 1$ on the cell $K$ and $q_{h} = 0$ on $\mathcal{T} \setminus
K$ leads to~\eqref{eqn:local_mass_conservation}.  Setting $\vect{v}_{h}
= \Bar{\vect{v}}_{h} = \vect{0}$ and $q_{h} = \Bar{q}_{h} = 1$
in~\eqref{eqn:fe_problem} leads to~\eqref{eqn:global_mass_conservation}.
\end{proof}

The local conservation property is in terms of the numerical mass flux
$\Hat{\vect{u}}$, as is typical for discontinuous Galerkin methods applied
to Stokes flow \citep{cockburn:2002,cockburn:2009}.  Classical local mass
conservation would be satisfied if $\beta =0$, but $\beta$ must be greater
than zero for stability of the equal-order case~\citep{hughes:1987}.
If the pressure field is chosen to be one polynomial order lower than
the velocity field, then $\beta$ can be set equal to zero and mass is
conserved locally and exactly. However, it will be shown that reducing the
size of the pressure space requires compromising on either momentum
conservation or energy stability.

\begin{proposition}[momentum conservation]
\label{prop:momentum_conservation}
If $\vect{u}_{h}$ and $\Bar{\vect{u}}_{h}$ solve~\eqref{eqn:fe_problem},
and the function spaces $V_{h}$, $\Bar{V}_{h}$, $Q_{h}$ and $\Bar{Q}_{h}$
are selected such that for a constant but otherwise arbitrary vector
$\vect{c}$ it holds that
$\vect{v}_{h} \cdot \vect{c} \in Q_{h} \forall \ \vect{v}_{h} \in
V_{h}$ and $\Bar{\vect{v}}_{h} \cdot \vect{c} \in \Bar{Q}_{h} \forall \
\Bar{\vect{v}}_{h} \in \Bar{V}_{h}$, then
\begin{equation}
  \frac{d}{dt} \int_{K} \vect{u}_{h} \dif x
    = \int_{K} \vect{f} \dif x - \int_{\pd K} \Hat{\vect{\sigma}}_{h} \vect{n} \dif s
  \quad \forall \ K \in \mathcal{T},
\label{eqn:local_momentum_conservation}
\end{equation}
and if $\Gamma_{D} = \emptyset$
\begin{equation}
  \frac{d}{dt}  \int_{\Omega} \vect{u}_{h} \dif x
  = \int_{\Omega} \vect{f} \dif x
  - \int_{\pd \Omega} \brac{1 - \lambda} \brac{\Bar{\vect{u}}_{h} \cdot \vect{n}} \Bar{\vect{u}}_{h} \dif s
  - \int_{\pd \Omega} \vect{h} \dif s.
\label{eqn:global_momentum_conservation}
\end{equation}
\end{proposition}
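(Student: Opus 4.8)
The plan is to mimic the mass-conservation argument by making clever constant test-function choices, exploiting the two hypotheses $\vect{v}_h \cdot \vect{c} \in Q_h$ and $\Bar{\vect{v}}_h \cdot \vect{c} \in \Bar{Q}_h$. For the local statement~\eqref{eqn:local_momentum_conservation}, I would work from the conservative local momentum balance~\eqref{eqn:local_momentum_eqn_conservative}, since that form displays the numerical flux $\Hat{\vect{\sigma}}_h$ explicitly, rather than the advective form that underlies $F_{\rm m}$. First I would set $\Bar{\vect{v}}_h = \vect{0}$, $q_h = 0$, $\Bar{q}_h = 0$, and choose $\vect{v}_h = \vect{c}$ on the cell $K$ and $\vect{v}_h = \vect{0}$ on $\mathcal{T} \setminus K$, which is admissible in $V_h$ since $\vect{c}$ is a constant vector. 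With $\vect{v}_h$ constant, $\nabla \vect{v}_h = \vect{0}$ and $\nabla^s \vect{v}_h = \vect{0}$, so the volume stress term and the diffusion-symmetrization boundary term both vanish, leaving only the time-derivative volume integral, the numerical-flux boundary integral, and the forcing term. Since $F$ is built from a $\chi$-weighted combination of the conservative and advective local balances, I must first confirm that both weightings reduce to the same conservative statement when $\vect{v}_h$ is constant; the advective volume term $\int_K (\nabla \vect{u}_h\, \vect{u}_h)\cdot \vect{c}\dif x$ does not obviously vanish, so the cleanest route is to argue directly from the equivalence of~\eqref{eqn:local_weak_momentum_L} with the original conservative form~\eqref{eqn:local_momentum_eqn_conservative} for constant $\vect{v}_h$. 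This requires that the discarded terms (those involving $\nabla \cdot \vect{u}_h$ and $(\Hat{\vect{u}}_h - \vect{u}_h)\cdot \vect{n}$) reassemble correctly, which is exactly where the hypothesis $\vect{v}_h \cdot \vect{c} \in Q_h$ enters: it lets me invoke the local continuity equation~\eqref{eqn:weak_incomp} with test function $q_h = \vect{v}_h \cdot \vect{c}$ restricted to $K$ to cancel precisely those discarded contributions. Pulling the constant $\vect{c}$ out of the integrals and noting that $\vect{c}$ is arbitrary yields the vector identity~\eqref{eqn:local_momentum_conservation}.

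For the global statement~\eqref{eqn:global_momentum_conservation} I would sum the local balance over all cells and use the global momentum flux continuity. Taking $\vect{v}_h = \vect{c}$ globally (admissible since constants lie in $V_h$ cell-wise), the interior facet contributions of $\sum_K \int_{\pd K} \Hat{\vect{\sigma}}_h \vect{n} \cdot \vect{c} \dif s$ do \emph{not} cancel automatically, because $\Hat{\vect{\sigma}}_h$ is double-valued on facets; this is precisely what the global flux-continuity equation~\eqref{eqn:flux_continuity_cons} is designed to control. I would therefore choose $\Bar{\vect{v}}_h = \vect{c}$ in~\eqref{eqn:flux_continuity_cons}, which is legitimate only when $\Gamma_D = \emptyset$ since otherwise $\Bar{\vect{v}}_h$ must vanish on $\Gamma_D$ — this explains the hypothesis. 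The right-hand side of~\eqref{eqn:flux_continuity_cons} then supplies exactly the Neumann-boundary momentum flux $\int_{\Gamma_N}(1-\lambda)(\Bar{\vect{u}}_h \otimes \Bar{\vect{u}}_h)\vect{n}\cdot\vect{c}\dif s + \int_{\Gamma_N}\vect{h}\cdot\vect{c}\dif s$, matching the two boundary terms in~\eqref{eqn:global_momentum_conservation} (with $\partial\Omega = \Gamma_N$ when $\Gamma_D=\emptyset$). Adding the summed local balances to~\eqref{eqn:flux_continuity_cons} tested with $\Bar{\vect{v}}_h = \vect{c}$ cancels the global $\sum_K \int_{\pd K} \Hat{\vect{\sigma}}_h \vect{n}\cdot\vect{c}\dif s$ term, leaving the desired global identity after stripping the arbitrary $\vect{c}$.

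The main obstacle I anticipate is the bookkeeping in reconciling the $\chi$-weighted advective/conservative functionals $F_{\rm m}$ and $\Bar{F}_{\rm m}$ with the purely conservative fluxes appearing in~\eqref{eqn:local_momentum_conservation} and~\eqref{eqn:flux_continuity_cons}. Because the formulation actually solved is~\eqref{eqn:fe_problem} built from~\eqref{eqn:local_weak_momentum_L} and~\eqref{eqn:global_weak_momentum_L}, I cannot simply quote the conservative forms; I must verify that, for \emph{constant} test functions and \emph{under the stated space-inclusion hypotheses}, the advective and conservative weightings coincide. Concretely, the integration-by-parts manipulations that produced~\eqref{eqn:cons_mom_eq_2} from~\eqref{eqn:cons_mom_eq} are reversible, and the terms discarded en route are controlled by the continuity equations~\eqref{eqn:weak_incomp} and~\eqref{eqn:global_incomp}; the hypotheses $\vect{v}_h \cdot \vect{c}\in Q_h$ and $\Bar{\vect{v}}_h \cdot \vect{c}\in \Bar{Q}_h$ are precisely the admissibility conditions that allow those continuity equations to be invoked with the required test functions, so that the $\chi$-dependence drops out entirely and the momentum balance holds independently of $\chi$. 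Once this equivalence is established, the remainder is the routine extraction of the arbitrary constant vector $\vect{c}$.
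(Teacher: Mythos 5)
Your proposal is correct in substance and follows essentially the same route as the paper's proof, which implements your plan in one stroke by testing \eqref{eqn:fe_problem} with $\vect{v}_{h} = \vect{e}_{j}$ and $q_{h} = -(1-\chi)\,\vect{u}_{h}\cdot\vect{e}_{j}$ on $K$ (and, for the global statement, additionally $\Bar{\vect{v}}_{h} = -\vect{e}_{j}$ and $\Bar{q}_{h} = -(1-\chi)\,\Bar{\vect{u}}_{h}\cdot\vect{e}_{j}$), so that the $\chi$-weighted advective terms recombine into the conservative numerical flux exactly as you anticipate. The one correction needed is that the continuity-equation test function doing the cancellation must be $q_{h} \propto \vect{u}_{h}\cdot\vect{c}$ (admissible because $\vect{u}_{h}\in V_{h}$, so the hypothesis gives $\vect{u}_{h}\cdot\vect{c}\in Q_{h}$), not $q_{h} = \vect{v}_{h}\cdot\vect{c}$ as you wrote: with your choice $\vect{v}_{h}=\vect{c}$ that function is constant, and the continuity equation tested with a constant merely reproduces mass conservation and does nothing to cancel the advective volume term $(1-\chi)\int_{K}\brac{\nabla\vect{u}_{h}\,\vect{u}_{h}}\cdot\vect{c}\dif x$.
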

\begin{proof}
Setting $\vect{v}_{h} = \vect{e}_{j}$ and $q_{h} = - (1-\chi) \vect{u}_{h}
\cdot \vect{e}_{j}$ on $K$, where $\vect{e}_{j}$ is a canonical unit
basis vector, $\vect{v}_{h} = \vect{0}$ and $q_{h} = 0$ on $\mathcal{T}
\backslash K$, $\Bar{\vect{v}}_{h} = \vect{0}$ and $\Bar{q}_{h} = 0$
in~\eqref{eqn:fe_problem},
\begin{multline}
   \frac{d}{dt}\int_{K} \vect{u}_{h} \cdot \vect{e}_{j} \dif x
  + \int_{\pd K} \brac{\Hat{\vect{u}}_{h} \cdot \vect{n}} \vect{u}_{h} \cdot \vect{e}_{j} \dif s
  + \int_{\pd K} \lambda \brac{\Hat{\vect{u}}_{h} \cdot \vect{n}}
             \brac{\Bar{\vect{u}}_{h} - \vect{u}_{h}} \cdot \vect{e}_{j} \dif s
\\
  + \int_{\pd K} \Hat{\vect{\sigma}}_{d, h} \vect{n} \cdot \vect{e}_{j} \dif s
  = \int_{K} \vect{f} \cdot \vect{e}_{j} \dif x,
\end{multline}
which from the definition of the fluxes in equation
\eqref{eqn:adv_int_flux} proves \eqref{eqn:local_momentum_conservation}.

Setting
$\vect{v}_{h} = \vect{e}_{j}$,
$\Bar{\vect{v}}_{h} = -\vect{e}_{j}$,
$q_{h} = -(1 - \chi) \vect{u}_{h} \cdot \vect{e}_{j}$
and
$\Bar{q}_{h} = -(1 - \chi) \Bar{\vect{u}}_{h} \cdot \vect{e}_{j}$
in~\eqref{eqn:fe_problem} leads
to~\eqref{eqn:global_momentum_conservation} directly.
\end{proof}

Local momentum conservation is in terms the numerical flux
$\Hat{\vect{\sigma}}_h$, as is a typical feature of discontinuous Galerkin
methods.  Note also the requirement on the size of the pressure space
relative to the components of the velocity space, which would not be
satisfied by methods that use lower-order polynomials for the pressure
than for the velocity, such as Taylor--Hood elements. Such elements
only conserve momentum when conservative forms of the momentum equation
are used which, in the advective limit, requires compromising on energy
stability. Provided that the requirements on the sizes of the function
spaces are met, momentum conservation holds irrespective of the value of
$\chi$, i.e. for advective as well as conservative forms of the advection
operator, and it will be shown that for $\chi=1/2$ the method is also
energy stable (see proposition~\ref{prop:energy_stability}).

For cases with Dirichlet boundary conditions ($\Gamma_{D} \ne \emptyset$),
demonstration of momentum conservation is less straightforward
since $\Bar{\vect{v}}_{h}$ can not be set equal to $\vect{e}_{j}$
on $\Gamma_{D}$.  This difficulty can be overcome by introducing
auxiliary flux terms on $\Gamma_{D}$. Details of the approach can be
found in Ref.~\citep{Hughes:2005}.

\begin{proposition}[global energy stability]
\label{prop:energy_stability}
If $\vect{u}_{h}$ solves \eqref{eqn:fe_problem} with $\chi = 1/2$
and homogeneous boundary conditions, then in the absence of forcing
terms and for a suitably large $\alpha$
\begin{equation}
   \frac{d}{dt} \int_{\Omega} \left| \vect{u}_{h} \right|^{2} \dif x \le 0.
\label{eqn:energy_stability}
\end{equation}
\end{proposition}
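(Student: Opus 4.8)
The plan is to test the finite element problem~\eqref{eqn:fe_problem} with the discrete solution itself, i.e. choose $\vect{v}_{h} = \vect{u}_{h}$, $\Bar{\vect{v}}_{h} = \Bar{\vect{u}}_{h}$, $q_{h} = p_{h}$ and $\Bar{q}_{h} = \Bar{p}_{h}$, and then show that the resulting identity collapses to $\frac{1}{2}\frac{d}{dt}\int_\Omega |\vect{u}_h|^2 \dif x$ on the left, with all remaining terms being non-negative (so that they may be moved to the right with a non-positive sign). First I would treat the time-derivative term, using $\int_\Omega \frac{\pd \vect{u}_h}{\pd t}\cdot \vect{u}_h \dif x = \frac{1}{2}\frac{d}{dt}\int_\Omega |\vect{u}_h|^2 \dif x$, which produces the left-hand side of~\eqref{eqn:energy_stability} up to the factor $1/2$. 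The forcing terms vanish by assumption, and with homogeneous boundary conditions $\vect{h}=\vect{0}$ and $\Bar{\vect{v}}_h = \vect{0}$ on $\Gamma_D$.

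Next I would collect the pressure contributions. The cell pressure terms appearing through $\vect{\sigma}_{d}$ and $\Hat{\vect{\sigma}}_{d}$ (namely $-\sum_K \int_K p_h \vect{I} \colon \nabla \vect{u}_h \dif x$ and $\sum_K \int_{\pd K} \Bar{p}_h \vect{I}\vect{n}\cdot \vect{u}_h \dif s$), together with the terms generated by the choice $q_h = p_h$ and $\Bar{q}_h = \Bar{p}_h$ in $F_{\rm c}$ and $\Bar{F}_{\rm c}$, should combine so that the only surviving pressure contribution is the stabilizing penalty $\sum_K \int_{\pd K} \frac{\beta h_K}{\nu+1}(\Bar{p}_h - p_h)^2 \dif s \ge 0$ coming from the definition~\eqref{eqn:def_uhat} of $\Hat{\vect{u}}_h$. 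This is the mechanism by which incompressibility is enforced only weakly yet still yields a sign-definite remainder; I expect the bookkeeping to be delicate because the $\Hat{\vect{u}}_h \cdot \vect{n}$ appearing in the flux terms carries the same pressure-jump correction.

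The crux of the argument is the advection/convection terms, and here the choice $\chi = 1/2$ is essential. The point of the skew-symmetric formulation built in Section~\ref{sec:adv_mom_eq} is precisely that at $\chi = 1/2$ the local advective contributions in $F_{\rm m}$ (the volume terms $-\tfrac{1}{2}\sum_K\int_K (\vect{u}_h\otimes\vect{u}_h)\colon\nabla\vect{v}_h \dif x$ and $\tfrac{1}{2}\sum_K\int_K(\nabla\vect{u}_h\,\vect{u}_h)\cdot\vect{v}_h\dif x$, the facet terms weighted by $\chi$ and by $\lambda$) together with the matching terms in $\Bar{F}_{\rm m}$ cancel in antisymmetric pairs when $\vect{v}_h = \vect{u}_h$ and $\Bar{\vect{v}}_h = \Bar{\vect{u}}_h$, leaving at most a non-negative dissipative residual. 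I would integrate the volume advection terms by parts to expose the common $\tfrac{1}{2}\int_{\pd K}(\Hat{\vect{u}}_h\cdot\vect{n})|\vect{u}_h|^2\dif s$ structure, regroup the interface contributions cell-by-cell together with their counterparts from $\Bar{F}_{\rm m}$, and use the weak flux continuity~\eqref{eqn:global_incomp} together with $\lambda \in \{0,1\}$ to show that the upwinding produces only terms of the form $\tfrac{1}{2}\sum_K\int_{\pd K}|\Hat{\vect{u}}_h\cdot\vect{n}|\,|\Bar{\vect{u}}_h - \vect{u}_h|^2\dif s \ge 0$. This is where I expect the main obstacle: verifying that after summing over cells and invoking $\chi=1/2$, every advective cross-term either cancels against its transpose or assembles into such a sign-definite facet integral, since the velocity is not point-wise solenoidal and the flux $\Hat{\vect{u}}_h$ differs from $\vect{u}_h$.

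Finally I would handle the viscous terms. The cell diffusion term contributes $2\nu\sum_K\int_K \nabla^s\vect{u}_h\colon\nabla^s\vect{u}_h\dif x = 2\nu\sum_K\|\nabla^s\vect{u}_h\|_{L^2(K)}^2 \ge 0$, while the three interface diffusion terms — the consistency term $\sum_K\int_{\pd K}\Hat{\vect{\sigma}}_{d,h}\vect{n}\cdot\vect{u}_h\dif s$ with its $-2\nu\nabla^s\vect{u}_h$ part, the symmetry term $\sum_K\int_{\pd K}2\nu(\Bar{\vect{u}}_h-\vect{u}_h)\cdot\nabla^s\vect{u}_h\,\vect{n}\dif s$, and the interior penalty term $\sum_K\int_{\pd K}\frac{\alpha}{h_K}2\nu|\Bar{\vect{u}}_h-\vect{u}_h|^2\dif s$ — combine into the standard symmetric interior penalty bilinear form evaluated at $(\vect{u}_h,\Bar{\vect{u}}_h)$ with itself. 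By a discrete trace (inverse) inequality, for $\alpha$ chosen suitably large this form is coercive, hence non-negative; this is the meaning of ``suitably large $\alpha$'' in the statement. Collecting all contributions, the test yields $\tfrac{1}{2}\frac{d}{dt}\int_\Omega|\vect{u}_h|^2\dif x$ equal to the negative of a sum of non-negative dissipative and stabilizing terms, which gives~\eqref{eqn:energy_stability}.
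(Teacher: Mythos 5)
There is a genuine gap, and it occurs at the very first step: the signs of your test functions. The paper's proof takes $\vect{v}_h = \vect{u}_h$ but $\Bar{\vect{v}}_h = -\Bar{\vect{u}}_h$, $q_h = -p_h$ and $\Bar{q}_h = -\Bar{p}_h$; you take all four with a plus sign, and with that choice none of the cancellations you are counting on actually occur. Consider the pressure terms: testing $F_{\rm m}$ with $\vect{v}_h = \vect{u}_h$ produces $-\sum_K \int_K p_h \, \nabla\cdot\vect{u}_h \dif x$ (from $-\sum_K\int_K \vect{\sigma}_{d}\colon\nabla\vect{v}\dif x$), while testing $F_{\rm c}$ with $q_h = +p_h$ produces, after integration by parts of $\sum_K\int_K \vect{u}_h\cdot\nabla p_h \dif x$, a second copy of $-\sum_K \int_K p_h\, \nabla\cdot\vect{u}_h \dif x$. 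The two volume terms double rather than cancel, and since $\vect{u}_h$ is not point-wise solenoidal this combined term is indefinite and cannot be controlled; only the choice $q_h = -p_h$ makes the saddle-point coupling cancel and leaves the non-negative penalty $\sum_K\int_{\pd K}\tfrac{\beta h_K}{1+\nu}\left|\Bar{p}_h - p_h\right|^2 \dif s$ that you correctly anticipate. The same problem hits the facet terms. With $\Bar{\vect{v}}_h = -\Bar{\vect{u}}_h$ the advective facet contributions at $\chi = 1/2$ assemble into the perfect square $\sum_K\int_{\pd K}\brac{\tfrac{1}{2}-\lambda}\brac{\Hat{\vect{u}}_h\cdot\vect{n}}\left|\Bar{\vect{u}}_h - \vect{u}_h\right|^2 \dif s \ge 0$, whereas with $+\Bar{\vect{u}}_h$ you instead obtain the indefinite quantity $\left|\vect{u}_h\right|^2 + 2\,\vect{u}_h\cdot\Bar{\vect{u}}_h - \left|\Bar{\vect{u}}_h\right|^2$ under the integral. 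Likewise, the interior-penalty contribution (from the $\Hat{\vect{\sigma}}_{d}$ terms in $F_{\rm m}$ and $\Bar{F}_{\rm m}$) becomes $-\sum_K\int_{\pd K}\tfrac{\alpha}{h_K}2\nu\brac{\Bar{\vect{u}}_h - \vect{u}_h}\cdot\brac{\Bar{\vect{u}}_h + \vect{u}_h}\dif s$, which has no sign, instead of the required $+\sum_K\int_{\pd K}\tfrac{\alpha}{h_K}2\nu\left|\Bar{\vect{u}}_h-\vect{u}_h\right|^2 \dif s$.

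Apart from the signs, your outline of the three mechanisms (pressure coupling reducing to the $\beta$-penalty, skew-symmetry at $\chi = 1/2$ yielding the upwind dissipation $\tfrac{1}{2}\left|\Hat{\vect{u}}_h\cdot\vect{n}\right|\left|\Bar{\vect{u}}_h-\vect{u}_h\right|^2$, and the viscous terms needing a suitably large $\alpha$ as in symmetric interior penalty methods, with reference to the diffusion analysis) is exactly the structure of the paper's proof. The conceptual point you are missing is that $\Bar{\vect{u}}_h$, $\Bar{p}_h$ and $p_h$ enter the formulation as multiplier-type variables: the functionals $F_{\rm c}$, $\Bar{F}_{\rm c}$ and $\Bar{F}_{\rm m}$ were constructed so that the indefinite coupling terms carry the same sign as their counterparts in $F_{\rm m}$, and they can only be made to cancel (or to assemble into differences such as $\Bar{\vect{u}}_h - \vect{u}_h$ and $\Bar{p}_h - p_h$) by weighting these equations with the \emph{negative} of the corresponding solution component. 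The strategy is salvageable by flipping those three signs, but as written the identity you derive does not collapse to an energy balance, and the argument fails at that point.
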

\begin{proof}
Setting
$\vect{v}_{h} = \vect{u}_{h}$,
$\Bar{\vect{v}}_{h} = -\Bar{\vect{u}}_{h}$,
$q_{h} = - p_{h}$
and
$\Bar{q}_{h} = - \Bar{p}_{h}$
in~\eqref{eqn:fe_problem} gives, for $\chi=1/2$,
\begin{multline}
   \int_{\Omega} \frac{\pd \vect{u}_{h}}{\pd t} \cdot \vect{u}_{h} \dif x
  + \sum_{K} \int_{\pd K} \brac{\frac{1}{2}-\lambda} \brac{\Hat{\vect{u}}_{h} \cdot \vect{n}}
             \left| \Bar{\vect{u}}_{h} - \vect{u}_{h} \right|^{2} \dif s
\\
  - \sum_{K}\int_{K} \vect{\sigma}_{d, h} \colon \nabla \vect{u}_{h} \dif x
  - \sum_{K} \int_{\pd K} \Hat{\vect{\sigma}}_{d, h} \vect{n} \cdot \brac{\Bar{\vect{u}}_{h} - \vect{u}_{h}} \dif s
\\
  + \sum_{K} \int_{\pd K} 2\nu \brac{\Bar{\vect{u}}_{h}-\vect{u}_{h}} \cdot \nabla^{s} \vect{u}_{h} \vect{n} \dif s
+  \int_{\Gamma_{N}} \brac{\frac{1}{2} - \lambda} \brac{\vect{\Bar{u}}_{h} \cdot \vect{n}}
             \left| \Bar{\vect{u}}_{h} \right|^{2}\dif s
\\
 - \sum_{K} \int_{K} \vect{u}_{h} \cdot \nabla p_{h} \dif x
 - \sum_{K} \int_{\pd K} \Hat{\vect{u}}_{h} \cdot \vect{n} \brac{\Bar{p}_{h}-p_{h}} \dif s
 + \int_{\pd \Omega} \Bar{\vect{u}}_{h} \cdot \vect{n} \Bar{p}_{h} \dif s
  = 0.
\end{multline}
Substituting the expressions for the diffusive fluxes given in
\eqref{eqn:def_mom_flux} and~\eqref{eqn:diff_int_flux} into the preceding equation,
\begin{multline}
   \int_{\Omega}  \frac{1}{2} \frac{\pd \left| \vect{u}_{h} \right|^{2}}{\pd t} \dif x
  + \sum_{K} \int_{\pd K} \frac{1}{2}\left|\Hat{\vect{u}}_{h} \cdot \vect{n}\right|
             \left| \Bar{\vect{u}}_{h} - \vect{u}_{h} \right|^{2} \dif s
  + \sum_{K}\int_{K} 2 \nu \left| \nabla^{s} \vect{u}_{h} \right|^{2} \dif x
\\
  + \sum_{K} \int_{\pd K} \frac{\alpha}{h_K} 2 \nu \left| \Bar{\vect{u}}_{h} - \vect{u}_{h} \right|^{2} \dif s
  + 2 \sum_{K} \int_{\pd K} 2\nu \brac{\Bar{\vect{u}}_{h} - \vect{u}_{h}}
    \cdot \nabla^{s} \vect{u} \vect{n} \dif s
\\
  - \sum_{K}\int_{K} p_{h} \vect{I} \colon \nabla \vect{u}_{h} \dif x
  - \sum_{K}\int_{\pd K} \Bar{p}_h \vect{n} \cdot \brac{\Bar{\vect{u}}_h-\vect{u}_h}\dif s
+  \int_{\Gamma_{N}} \frac{1}{2}\left| \Bar{\vect{u}}_{h} \cdot \vect{n} \right|
             \left| \Bar{\vect{u}}_{h} \right|^{2}\dif s
\\
 - \sum_{K} \int_{K} \vect{u}_{h} \cdot \nabla p_{h} \dif x
 - \sum_{K} \int_{\pd K} \Hat{\vect{u}}_{h} \cdot \vect{n} \brac{\Bar{p}_{h}-p_{h}} \dif s
 + \int_{\pd \Omega} \Bar{\vect{u}}_{h} \cdot \vect{n} \Bar{p}_{h} \dif s
  = 0,
\label{eqn:energy_1}
\end{multline}
in which we have also used $\brac{1/2-\lambda}\vect{u}\cdot\vect{n} =
\left| \vect{u} \cdot \vect{n}\right|/2$ on facets. After substitution of the
mass flux $\Hat{\vect{u}}_{h}$ given in~\eqref{eqn:def_uhat} and
the application of partial integration of the pressure gradient term, we finally obtain,
\begin{multline}
  \frac{d}{dt} \int_{\Omega}  \frac{1}{2} \left| \vect{u}_{h} \right|^{2} \dif x
  + \sum_{K} \int_{\pd K} \frac{1}{2}\left| \Hat{\vect{u}}_{h} \cdot \vect{n}\right|
             \left| \Bar{\vect{u}}_{h} - \vect{u}_{h} \right|^{2} \dif s
  + \sum_{K}\int_{K} 2 \nu \left| \nabla^{s} \vect{u}_{h} \right|^{2} \dif x
\\
  + \sum_{K} \int_{\pd K} \frac{\alpha}{h_K} 2 \nu \left| \Bar{\vect{u}}_{h} - \vect{u}_{h} \right|^{2} \dif s
  + 2 \sum_{K} \int_{\pd K} 2\nu \brac{\Bar{\vect{u}}_{h} - \vect{u}_{h}}
    \cdot \nabla^{s} \vect{u}_{h}  \vect{n} \dif s
\\
+  \int_{\Gamma_{N}} \frac{1}{2}\left| \vect{\Bar{u}}_{h} \cdot \vect{n} \right|
             \left| \Bar{\vect{u}}_{h} \right|^{2}\dif s
  + \sum_{K}\int_{\pd K} \frac{\beta h_{K}}{1+\nu} \left| \Bar{p}_{h} - p_{h} \right|^{2} \dif s
  = 0.
\label{eqn:energy_2}
\end{multline}
For the case $\nu = 0$, all terms in~\eqref{eqn:energy_2} other
than the time derivative term, are guaranteed to be non-negative,
and therefore~\eqref{eqn:energy_stability} holds. For the case $\nu > 0$,
no conclusion as to the sign of $\brac{\Bar{\vect{u}}_h-\vect{u}_h}
\cdot \nabla^{s} \vect{u}_h \vect{n}$ in~\eqref{eqn:energy_2} can
be drawn.  However, there exists an $\alpha > 0$, independent of
$h_{K}$, such that~\eqref{eqn:energy_stability} holds (see the proof in
\citep{Wells:2010} for the diffusion equation).
\end{proof}

The key to the energy stability is the use of the combined
conservative/advective form of the momentum equations.  The total kinetic
energy in the method will decrease monotonically, despite $\vect{u}_{h}$
not being point-wise divergence-free. The amount of dissipation is
determined by the difference between the cell and facet velocity fields on
facets ($\Bar{\vect{u}}_{h} - \vect{u}_{h}$) and the difference between
the cell and facet pressure fields on facets ($\Bar{p}_{h} - p_{h}$). It
is not possible to prove a cell-wise kinetic energy inequality.

Recall that the energy stable scheme is also momentum
conserving if the dimensional components of the
velocity space are subspaces of the pressure space (see
Proposition~\ref{prop:momentum_conservation}).  Otherwise, simultaneous
momentum conservation (\eqref{eqn:local_momentum_conservation}
and~\eqref{eqn:global_momentum_conservation}) and energy
stability~\eqref{eqn:energy_stability} is not possible.  Notably, this
will be the case for finite element methods using lower-order polynomials
for the pressure than for the velocity. For advection dominated flows
such elements can only guarantee energy stability when compromising on
momentum conservation or by adding artificial viscosity. In the viscous
limit the requirement on the size of the velocity and pressure spaces
can be relaxed without compromising energy stability, permitting a lower
polynomial order for the pressure space than for the velocity space,
which for Stokes flow is advantageous from the viewpoint of accuracy,
as will be shown in Section~\ref{sec:stokes-with-source}.

A complete stability proof, for the Stokes case alone, would require a
more subtle analysis, with the usual stability conditions demonstrated
for suitably defined norms that include both functions on cells and the
functions on facets. \emph{A priori} stability and convergence estimates
for the method applied to the advection-reaction-diffusion equation have
been proved~\citep{Wells:2010}, and efforts in this direction for the
Stokes equations are ongoing.

\section{A fully-discrete formulation}

We now present a fully-discrete formulation.  The time interval of
interest, $I$, is partitioned such that $I = \left(0, t_1, \hdots,t_{N-1},
t_{N} \right]$ and time increments are denoted $\delta t_n = t_{n+1} -
t_n$.  We consider a $\theta$-method for dealing with the time derivative,
with mid-point values of a function $y$ given by
\begin{equation}
  y_{n+\theta} \colonequals \brac{1-\theta} y_{n} + \theta y_{n+1},
\end{equation}
where $\theta \in [0, 1]$ is a parameter.

The advective velocity will be evaluated at the current time
$t_{n}$, thereby linearizing the problem (Picard linearization).
For the momentum-related $F$-functionals presented in
Section~\ref{sec:combined_method}, we now present time-discrete
counterparts. The term $\lambda$ is always evaluated on the basis of
the known velocity field at time $t_{n}$.  A time-discrete counterpart
of~\eqref{eqn:local_weak_momentum_L} reads
\begin{multline}
 F_{\delta t, \rm m}(\vect{U}_{n+1}; \vect{v})
  \colonequals
     \int_{\Omega} \frac{\vect{u}_{n+1} - \vect{u}_{n}}{\delta t} \cdot \vect{v} \dif x
  - \chi \sum_{K}\int_{K} \brac{\vect{u}_{n+\theta} \otimes \vect{u}_{n}} \colon \nabla \vect{v} \dif x
\\
  + \brac{1-\chi} \sum_{K}\int_{K} \brac{\nabla \vect{u}_{n+\theta} \vect{u}_{n} }\cdot \vect{v} \dif x
  + \chi \sum_{K} \int_{\pd K} \brac{\Hat{\vect{u}}_{n} \cdot \vect{n}}  \vect{u}_{n+\theta} \cdot \vect{v} \dif s
\\
  + \sum_{K} \int_{\pd K} \lambda \brac{\Hat{\vect{u}}_{n} \cdot \vect{n}}
                  \brac{\Bar{\vect{u}}_{n+\theta} - \vect{u}_{n+\theta}} \cdot \vect{v} \dif s
  - \sum_{K}\int_{K} \vect{\sigma}_{d, {n+\theta}} \colon \nabla \vect{v} \dif x
\\
  + \sum_{K} \int_{\pd K} \Hat{\vect{\sigma}}_{d, {n+\theta}} \vect{n} \cdot \vect{v} \dif s
  + \sum_{K} \int_{\pd K} 2\nu \brac{\Bar{\vect{u}}_{n+\theta} - \vect{u}_{n+\theta}}
    \cdot \nabla^{s} \vect{v} \vect{n} \dif s
\\
  - \int_{\Omega} \vect{f}_{n+\theta} \cdot \vect{v} \dif x,
\end{multline}
and a time-discrete counterpart of~\eqref{eqn:global_weak_momentum_L}
reads
\begin{multline}
 \Bar{F}_{\delta t, \rm m}(\vect{U}_{n+1}; \Bar{\vect{v}})
    \colonequals
 \chi  \sum_{K} \int_{\pd K}  \brac{\Hat{\vect{u}}_{n} \cdot \vect{n}}
               \vect{u}_{n+\theta} \cdot \Bar{\vect{v}} \dif s
\\
- \brac{1-\chi} \sum_{K} \int_{\pd K}  \brac{\Hat{\vect{u}}_{n} \cdot \vect{n}}
               \brac{\Bar{\vect{u}}_{n+\theta} - \vect{u}_{n+\theta}} \cdot \Bar{\vect{v}} \dif s
\\
+ \sum_{K} \int_{\pd K}  \lambda \brac{\Hat{\vect{u}}_{n} \cdot \vect{n}}
    \brac{\Bar{\vect{u}}_{n+\theta} - \vect{u}_{n+\theta}} \cdot \Bar{\vect{v}} \dif s
+  \sum_{K} \int_{\pd K} \Hat{\vect{\sigma}}_{d, n+\theta} \vect{n} \cdot \Bar{\vect{v}} \dif s
\\
-  \int_{\Gamma_{N}} \brac{\chi - \lambda} \brac{\Bar{\vect{u}}_{n} \cdot \vect{n}}
    \Bar{\vect{u}}_{n+\theta} \cdot \Bar{\vect{v}} \dif s
- \int_{\Gamma_{N}} \vect{h}_{n+\theta} \cdot \Bar{\vect{v}}  \dif s.
\end{multline}
Defining
\begin{multline}
F_{\delta t}(\vect{U}_{n+1}; \vect{W})
\colonequals
  F_{\delta t, \rm m}(\vect{U}_{n+1}; \vect{v})
\\
+ \Bar{F}_{\delta t, \rm m}(\vect{U}_{n+1}; \Bar{\vect{v}})
  + F_{\rm c}(\vect{U}_{n+1}; q)
  + \Bar{F}_{\rm c}(\vect{U}_{n+1}; \Bar{q}),
\end{multline}
a fully-discrete finite element problem at time $t_{n+1}$ involves:
given the solution
$\vect{U}_{h,n} \in  V_{h} \times \Bar{V}_{h} \times Q_{h} \times \Bar{Q}_{h}$
at time $t_n$,
the forcing term $\vect{f}_{n+\theta} \in \left[ L^2\brac{\Omega} \right]^d$,
the boundary condition
$\vect{h}_{n+\theta} \in \left[ L^2\brac{\Gamma_N} \right]^d$
and the viscosity $\nu$,
find $\vect{U}_{h, n+1} \in V_{h} \times \Bar{V}_{h} \times Q_{h} \times \Bar{Q}_{h}$
such that
\begin{equation}
F_{\delta t}(\vect{U}_{h,n+1}; \vect{W}_{h}) = 0
    \quad \forall \ \vect{W}_{h} \in V_{h} \times \Bar{V}_{h} \times Q_{h} \times \Bar{Q}_{h}.
\label{eqn:fe_problem_discrete}
\end{equation}
For the scheme that has been adopted, $F_{\delta t}$ is linear in both
$\vect{U}_{h, n+1}$ and $\vect{W}_{h}$.

We now demonstrate that the considered fully discrete formulation
inherits the conservation and energy stability properties of the
semi-discrete case. As for the semi-discrete case, all results
hold if the spaces $\Bar{V}_{h}$ and $\Bar{Q}_{h}$ are replaced
by $\Bar{V}^{\star}_{h}$ and $\Bar{Q}^{\star}_{h}$, respectively.

\begin{proposition}[fully discrete mass conservation]
If $\vect{u}_{h, n+1}$ and $\Bar{\vect{u}}_{h, n+1}$
satisfy~\eqref{eqn:fe_problem_discrete}, then
\begin{equation}
  \int_{\pd K} \Hat{\vect{u}}_{h, n+1} \cdot \vect{n} \dif s = 0
  \quad \forall \ K \in \mathcal{T},
\label{eqn:local_mass_conservation_discrete}
\end{equation}
and
\begin{equation}
   \int_{\pd \Omega} \Bar{\vect{u}}_{h, n+1} \cdot \vect{n}\dif s = 0.
\label{eqn:global_mass_conservation_discrete}
\end{equation}
\end{proposition}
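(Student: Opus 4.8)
The plan is to mirror the semi-discrete mass-conservation proof (Proposition~\ref{prop:mass_conservation}) exactly, exploiting that the two continuity functionals are unchanged in the fully-discrete formulation. Indeed, inspecting the definition of $F_{\delta t}$, the continuity parts are $F_{\rm c}(\vect{U}_{n+1}; q)$ and $\Bar{F}_{\rm c}(\vect{U}_{n+1}; \Bar{q})$, which are the \emph{same} functionals~\eqref{eqn:local_weak_continuity_L} and~\eqref{eqn:global_weak_continuity_L} used in the semi-discrete case, now evaluated at the new time level $t_{n+1}$. The time-discretization only touches the momentum functionals $F_{\delta t, \rm m}$ and $\Bar{F}_{\delta t, \rm m}$, so the continuity structure of the problem is untouched.

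Concretely, to establish~\eqref{eqn:local_mass_conservation_discrete} I would test~\eqref{eqn:fe_problem_discrete} with $\vect{v}_{h} = \vect{0}$, $\Bar{\vect{v}}_{h} = \vect{0}$, $\Bar{q}_{h} = 0$, and $q_{h}$ equal to the indicator of a single cell $K$ (that is, $q_{h} = 1$ on $K$ and $q_{h} = 0$ on $\mathcal{T} \setminus K$), which is admissible since $P_{m}(K)$ with $m \geq 0$ contains constants. With these choices the momentum and global-continuity contributions vanish and only $F_{\rm c}(\vect{U}_{h,n+1}; q_{h})$ survives. Because $\nabla q_{h} = \vect{0}$ on $K$, the volume term in~\eqref{eqn:local_weak_continuity_L} drops out and one is left with $-\int_{\pd K} \Hat{\vect{u}}_{h, n+1} \cdot \vect{n} \dif s = 0$, which is precisely~\eqref{eqn:local_mass_conservation_discrete}.

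For the global statement~\eqref{eqn:global_mass_conservation_discrete} I would test with $\vect{v}_{h} = \vect{0}$, $\Bar{\vect{v}}_{h} = \vect{0}$, $q_{h} = 1$ globally on $\mathcal{T}$, and $\Bar{q}_{h} = 1$ on $\mathcal{F}$, which are again admissible since both $Q_h$ and $\Bar{Q}_h$ contain constants. The surviving contribution is $F_{\rm c}(\vect{U}_{h,n+1}; 1) + \Bar{F}_{\rm c}(\vect{U}_{h,n+1}; 1)$. In this sum the interior-facet contributions of $\Hat{\vect{u}}_{h, n+1} \cdot \vect{n}$ appear with opposite sign from the two functionals and cancel (after noting the orientation of $\vect{n}$), leaving only the boundary term $-\int_{\pd \Omega} \Bar{\vect{u}}_{h, n+1} \cdot \vect{n} \dif s = 0$. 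Alternatively, since~\eqref{eqn:local_mass_conservation_discrete} already gives $\int_{\pd K} \Hat{\vect{u}}_{h,n+1} \cdot \vect{n} \dif s = 0$ for every $K$, summing over cells annihilates all interior facets by flux cancellation and immediately reduces $\Bar{F}_{\rm c}$ to the exterior boundary integral.

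I do not anticipate a genuine obstacle here: the result is a direct corollary of the fact that the continuity functionals are identical to the semi-discrete ones and that the constant fields lie in the relevant discrete spaces ($m \geq 0$ and $\Bar{m} \geq 0$). The only point requiring mild care is bookkeeping of the normal orientation when cancelling shared-facet contributions in the global argument, but this is purely routine and does not depend on the time discretization. Thus the proof amounts to repeating the semi-discrete argument verbatim at time level $t_{n+1}$.
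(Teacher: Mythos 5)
Your proof is correct and is essentially the paper's own argument: the paper's proof of this proposition consists of the single remark that it follows the same steps as Proposition~\ref{prop:mass_conservation}, i.e.\ testing~\eqref{eqn:fe_problem_discrete} with constant cell-wise and facet pressure test functions while all velocity test functions vanish, which is exactly what you do after correctly observing that $F_{\rm c}$ and $\Bar{F}_{\rm c}$ enter $F_{\delta t}$ unchanged, evaluated at $\vect{U}_{h,n+1}$.

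One small imprecision, harmless to the conclusion: interior-facet contributions of $\Hat{\vect{u}}_{h,n+1}\cdot\vect{n}$ do \emph{not} cancel by normal orientation across shared facets, because the numerical mass flux~\eqref{eqn:def_uhat} is double-valued there (as the paper emphasizes); in your main global argument the cancellation is exact term-by-term between $-\sum_{K}\int_{\pd K}\Hat{\vect{u}}_{h,n+1}\cdot\vect{n}\,q_{h}\dif s$ in $F_{\rm c}$ and $+\sum_{K}\int_{\pd K}\Hat{\vect{u}}_{h,n+1}\cdot\vect{n}\,\Bar{q}_{h}\dif s$ in $\Bar{F}_{\rm c}$ when $q_{h}=\Bar{q}_{h}=1$, and in your alternative argument the sum $\sum_{K}\int_{\pd K}\Hat{\vect{u}}_{h,n+1}\cdot\vect{n}\dif s$ vanishes simply because each cell integral is individually zero by~\eqref{eqn:local_mass_conservation_discrete}, not because interior fluxes annihilate pairwise.
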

\begin{proof}
The proof follows the same steps as the proof of
Proposition~\ref{prop:mass_conservation}.
\end{proof}

\begin{proposition}[fully discrete momentum conservation]
If $\vect{U}_{h, n}$ and $\vect{U}_{h, n+1}$
solve~\eqref{eqn:fe_problem_discrete}, and the function spaces
$V_{h}$, $\Bar{V}_{h}$, $Q_{h}$ and $\Bar{Q}_{h}$ are selected such
that for a constant but otherwise arbitrary vector $\vect{c}$
it holds that $\vect{v}_{h} \cdot \vect{c} \in Q_{h} \forall \ \vect{v}_{h} \in
V_{h}$ and $\Bar{\vect{v}}_{h} \cdot \vect{c} \in \Bar{Q}_{h} \forall \
\Bar{\vect{v}}_{h} \in \Bar{V}_{h}$, then
\begin{equation}
  \int_{K} \frac{\vect{u}_{h, n+1} - \vect{u}_{h, n}}{\delta t} \dif x
    = \int_{K} \vect{f}_{n+\theta} \dif x
     - \int_{\pd K} \Hat{\vect{\sigma}}_{h, n+\theta} \vect{n} \dif s
\quad \forall \ K \in \mathcal{T},
\label{eqn:momentum_conservation_local_discrete}
\end{equation}
and if $\Gamma_{D} = \emptyset$
\begin{multline}
  \int_{\Omega} \frac{\vect{u}_{h, n+1} - \vect{u}_{h, n}}{\delta t}  \dif x
  = \int_{\Omega} \vect{f}_{n+\theta} \dif x
\\
  - \int_{\pd \Omega} \brac{1 - \lambda} \brac{\Bar{\vect{u}}_{h, n}
   \cdot \vect{n}} \Bar{\vect{u}}_{h, n+\theta} \dif s
  - \int_{\pd \Omega} \vect{h}_{n+\theta} \dif s.
\label{eqn:momentum_conservation_global_discrete}
\end{multline}
\end{proposition}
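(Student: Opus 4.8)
The plan is to follow the proof of Proposition~\ref{prop:momentum_conservation} for the semi-discrete case, exploiting the same constant test fields and the assumed compatibility $\vect{v}_{h}\cdot\vect{c}\in Q_{h}$ and $\Bar{\vect{v}}_{h}\cdot\vect{c}\in\Bar{Q}_{h}$. For the local statement~\eqref{eqn:momentum_conservation_local_discrete}, I would take $\vect{v}=\vect{e}_{j}$ on a single cell $K$ (and $\vect{v}=\vect{0}$ elsewhere), $\Bar{\vect{v}}=\vect{0}$, $q=0$ and $\Bar{q}=0$ in~\eqref{eqn:fe_problem_discrete}, so that $F_{\delta t,\rm m}(\vect{U}_{n+1};\vect{e}_{j}|_{K})=0$. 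Because $\vect{e}_{j}$ is constant, every term carrying $\nabla\vect{v}$ or $\nabla^{s}\vect{v}$ drops out (the conservative advection term, the diffusive cell term and the symmetry term), leaving the time-derivative term, the cell advective term $(1-\chi)\int_{K}\brac{\nabla\vect{u}_{n+\theta}\,\vect{u}_{n}}\cdot\vect{e}_{j}\dif x$, the $\chi$-weighted facet advective contribution, the $\lambda$-term and the diffusive numerical-flux term.

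The one genuinely new point relative to the semi-discrete proof is the treatment of the surviving cell advective term. Writing $\brac{\nabla\vect{u}_{n+\theta}\,\vect{u}_{n}}\cdot\vect{e}_{j}=\vect{u}_{n}\cdot\nabla\brac{\vect{u}_{n+\theta}\cdot\vect{e}_{j}}$, I would eliminate it using the continuity equation \emph{at the previous time level}, namely $F_{\rm c}(\vect{U}_{n};q)=0$, which is available because $\vect{U}_{h,n}$ is itself a solution of~\eqref{eqn:fe_problem_discrete}. This is the crucial difference: owing to the Picard linearization the advecting field is $\vect{u}_{n}$ and $\Hat{\vect{u}}_{n}$, whereas the continuity functional embedded in the current step $F_{\rm c}(\vect{U}_{n+1};\cdot)$ refers to $\vect{u}_{n+1}$, so the two cannot be matched directly. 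Choosing $q=(1-\chi)\,\vect{u}_{n+\theta}\cdot\vect{e}_{j}$, which lies in $Q_{h}$ since $\vect{u}_{n+\theta}=(1-\theta)\vect{u}_{n}+\theta\vect{u}_{n+1}\in V_{h}$ together with the space assumption, converts the cell term into $(1-\chi)\int_{\pd K}\brac{\Hat{\vect{u}}_{n}\cdot\vect{n}}\,\vect{u}_{n+\theta}\cdot\vect{e}_{j}\dif s$. Added to the existing $\chi$-weighted facet term this yields the full $\int_{\pd K}\brac{\Hat{\vect{u}}_{n}\cdot\vect{n}}\,\vect{u}_{n+\theta}\cdot\vect{e}_{j}\dif s$, and together with the $\lambda$-term and $\Hat{\vect{\sigma}}_{d,n+\theta}\vect{n}$ this assembles exactly $\int_{\pd K}\Hat{\vect{\sigma}}_{h,n+\theta}\vect{n}\cdot\vect{e}_{j}\dif s$ by the definitions~\eqref{eqn:adv_int_flux} and~\eqref{eqn:diff_int_flux} read at the discrete time levels. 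As $\vect{e}_{j}$ is arbitrary, \eqref{eqn:momentum_conservation_local_discrete} follows.

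For the global statement~\eqref{eqn:momentum_conservation_global_discrete}, assuming $\Gamma_{D}=\emptyset$ so that the constant field is admissible in $\Bar{V}_{h}$, I would repeat the construction with $\vect{v}=\vect{e}_{j}$ on all of $\Omega$ and $\Bar{\vect{v}}=-\vect{e}_{j}$, again eliminating the cell advective term via $F_{\rm c}(\vect{U}_{n};\cdot)=0$ and now also the facet terms carrying $\Bar{\vect{u}}$ via the global continuity relation at the old level $\Bar{F}_{\rm c}(\vect{U}_{n};\cdot)=0$, with $\Bar{q}=(1-\chi)\,\Bar{\vect{u}}_{n+\theta}\cdot\vect{e}_{j}\in\Bar{Q}_{h}$. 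The opposite signs of $\vect{v}$ and $\Bar{\vect{v}}$ make the interior numerical-flux contributions of $F_{\delta t,\rm m}$ and $\Bar{F}_{\delta t,\rm m}$ cancel, leaving only the $\Gamma_{N}$ terms $\int_{\Gamma_{N}}\brac{1-\lambda}\brac{\Bar{\vect{u}}_{n}\cdot\vect{n}}\Bar{\vect{u}}_{n+\theta}\cdot\vect{e}_{j}\dif s$ and $\int_{\Gamma_{N}}\vect{h}_{n+\theta}\cdot\vect{e}_{j}\dif s$ alongside the time-derivative term, from which arbitrariness of $\vect{e}_{j}$ gives~\eqref{eqn:momentum_conservation_global_discrete}. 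The main obstacle I anticipate is precisely the bookkeeping of time levels in these cancellations: one must invoke the continuity relations at $t_{n}$, where the advecting velocity lives, rather than at $t_{n+1}$, and ensure that the mid-point fields $\vect{u}_{n+\theta}$ and $\Bar{\vect{u}}_{n+\theta}$ that appear as the advected quantities are exactly the ones placed in the test slots $q$ and $\Bar{q}$; once this is arranged, the remaining cancellations are purely algebraic and identical in spirit to the semi-discrete argument.
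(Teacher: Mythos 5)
Your proof is correct, and structurally it is the same argument as the paper's: constant vector test functions on a cell (respectively on all of $\Omega$ with $\Bar{\vect{v}}_h = -\vect{e}_j$), with the space-compatibility hypothesis used to place $(1-\chi)\vect{u}_{h,n+\theta}\cdot\vect{e}_{j}$ and $(1-\chi)\Bar{\vect{u}}_{h,n+\theta}\cdot\vect{e}_{j}$ in the scalar test slots so that the surviving advective cell term becomes a facet term, after which the flux definitions assemble $\Hat{\vect{\sigma}}_{h,n+\theta}$. Where you genuinely diverge is exactly the point you flag yourself: the time level of the continuity relation. The paper's proof prescribes $q_{h}=-(1-\chi)\vect{u}_{h,n+\theta}\cdot\vect{e}_{j}$ and $\Bar{q}_{h}=-(1-\chi)\Bar{\vect{u}}_{h,n+\theta}\cdot\vect{e}_{j}$ directly in~\eqref{eqn:fe_problem_discrete}, which activates $F_{\rm c}(\vect{U}_{h,n+1};\cdot)$ and $\Bar{F}_{\rm c}(\vect{U}_{h,n+1};\cdot)$, i.e.\ continuity of $\vect{u}_{n+1}$ and $\Hat{\vect{u}}_{n+1}$; but the Picard-linearized advecting fields are $\vect{u}_{n}$ and $\Hat{\vect{u}}_{n}$, so that choice by itself leaves a residual of the form
\begin{equation*}
(1-\chi)\int_{K}\brac{\vect{u}_{n}-\vect{u}_{n+1}}\cdot\nabla\brac{\vect{u}_{n+\theta}\cdot\vect{e}_{j}}\dif x
+(1-\chi)\int_{\pd K}\brac{\brac{\Hat{\vect{u}}_{n+1}-\Hat{\vect{u}}_{n}}\cdot\vect{n}}\,\vect{u}_{n+\theta}\cdot\vect{e}_{j}\dif s,
\end{equation*}
which can only be removed by invoking the continuity relations satisfied by $\vect{U}_{h,n}$ at its own time step. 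You instead invoke $F_{\rm c}(\vect{U}_{h,n};\cdot)=0$ and $\Bar{F}_{\rm c}(\vect{U}_{h,n};\cdot)=0$ from the outset, which matches the advecting fields exactly and makes transparent why the hypothesis requires \emph{both} $\vect{U}_{h,n}$ and $\vect{U}_{h,n+1}$ to be solutions. What the paper's phrasing buys is brevity and exact formal parallelism with the semi-discrete Proposition~\ref{prop:momentum_conservation}; what your version buys is a cancellation that actually closes as written, so it is the more careful rendering of the same idea.
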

\begin{proof}
The proof follows the same steps as the proof
to Proposition~\ref{prop:momentum_conservation}.
Equation~\eqref{eqn:momentum_conservation_local_discrete} follows from
setting $\vect{v}_{h} = \vect{e}_{j}$ and $q_{h} = - (1-\chi) \vect{u}_{h,
n+\theta} \cdot \vect{e}_{j}$ on $K$, $\vect{v}_{h} = \vect{0}$ and $q_{h}
= 0$ on $\mathcal{T} \backslash K$, $\Bar{\vect{v}}_{h} = \vect{0}$
and $\Bar{q}_{h} = 0$ in~\eqref{eqn:fe_problem_discrete}.
Equation~\eqref{eqn:momentum_conservation_global_discrete} follows
from setting
$\vect{v}_{h} = \vect{e}_{j}$,
$\Bar{\vect{v}}_{h} = -\vect{e}_{j}$,
$q_{h} = -(1 - \chi) \vect{u}_{h, n+\theta} \cdot \vect{e}_{j}$
and
$\Bar{q}_{h} = -(1 - \chi) \Bar{\vect{u}}_{h, n+\theta} \cdot \vect{e}_{j}$
in~\eqref{eqn:fe_problem_discrete}.
\end{proof}

\begin{proposition}[fully discrete energy stability]
If $\vect{U}_{h, n}$ and $\vect{U}_{h, n+1}$ solve
\eqref{eqn:fe_problem_discrete} with $\chi = 1/2$, $\vect{f} = \vect{0}$
and $\Gamma_{D} = \pd \Omega$, for $\theta \ge 1/2$ and suitably large
$\alpha$
\begin{equation}
   \int_{\Omega} \left| \vect{u}_{h,n+1} \right|^{2} \dif x
      \le \int_{\Omega} \left| \vect{u}_{h,n} \right|^{2} \dif x.
\label{eqn:discrete_energy_stability}
\end{equation}
\end{proposition}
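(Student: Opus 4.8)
The plan is to mirror the proof of Proposition~\ref{prop:energy_stability}, now testing the fully discrete problem~\eqref{eqn:fe_problem_discrete} with $\vect{v}_{h} = \vect{u}_{h,n+\theta}$, $\Bar{\vect{v}}_{h} = -\Bar{\vect{u}}_{h,n+\theta}$, $q_{h} = -p_{h,n+\theta}$ and $\Bar{q}_{h} = -\Bar{p}_{h,n+\theta}$; these are admissible since the discrete spaces are linear and each midpoint field is a convex combination of its values at $t_{n}$ and $t_{n+1}$. The one genuinely new element relative to the semi-discrete case is the discrete time term. Using $y_{n+\theta} = \brac{1-\theta} y_{n} + \theta y_{n+1}$ one has the pointwise identity $\brac{\vect{u}_{n+1} - \vect{u}_{n}} \cdot \vect{u}_{n+\theta} = \tfrac{1}{2}\brac{\left|\vect{u}_{n+1}\right|^{2} - \left|\vect{u}_{n}\right|^{2}} + \brac{\theta - \tfrac{1}{2}}\left|\vect{u}_{n+1} - \vect{u}_{n}\right|^{2}$, so that the first term of $F_{\delta t, \rm m}$ contributes $\tfrac{1}{2 \delta t}\int_{\Omega}\brac{\left|\vect{u}_{h,n+1}\right|^{2} - \left|\vect{u}_{h,n}\right|^{2}}\dif x$ together with the non-negative term $\tfrac{\theta - 1/2}{\delta t}\int_{\Omega}\left|\vect{u}_{h,n+1} - \vect{u}_{h,n}\right|^{2}\dif x$. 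This is exactly where the restriction $\theta \ge 1/2$ is used.

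Next I would treat the advective terms. With $\chi = 1/2$ the two cell advection integrals of $F_{\delta t, \rm m}$ cancel pointwise (both reduce to $\pm \tfrac{1}{2}\vect{u}_{h,n}\cdot\nabla \tfrac{1}{2}\left|\vect{u}_{h,n+\theta}\right|^{2}$), and the facet advection integrals of $F_{\delta t, \rm m}$ and $\Bar{F}_{\delta t, \rm m}$ combine, using that $\Bar{\vect{u}}_{h,n+\theta}$ is single-valued on facets, into $\sum_{K}\int_{\pd K}\brac{\tfrac{1}{2} - \lambda}\brac{\Hat{\vect{u}}_{h,n}\cdot\vect{n}}\left|\Bar{\vect{u}}_{h,n+\theta} - \vect{u}_{h,n+\theta}\right|^{2}\dif s$, which equals $\sum_{K}\int_{\pd K}\tfrac{1}{2}\left|\Hat{\vect{u}}_{h,n}\cdot\vect{n}\right| \left|\Bar{\vect{u}}_{h,n+\theta} - \vect{u}_{h,n+\theta}\right|^{2}\dif s \ge 0$ by the definition of $\lambda$. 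The essential observation is that freezing the advecting velocity at $t_{n}$ (Picard linearization) does not spoil the skew-symmetric cancellation, because $\Hat{\vect{u}}_{h,n}\cdot\vect{n}$ appears only as a frozen coefficient of a quadratic form in the transported jump $\Bar{\vect{u}}_{h,n+\theta} - \vect{u}_{h,n+\theta}$. The viscous terms are structurally identical to those in~\eqref{eqn:energy_2} but evaluated at $t_{n+\theta}$, and are made non-negative for $\alpha$ large enough (independent of $h_{K}$) by the interior-penalty argument cited from~\citep{Wells:2010}.

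The main obstacle is the pressure--continuity coupling. The momentum functional produces pressure work at the intermediate level $t_{n+\theta}$ (via $\vect{\sigma}_{d,n+\theta}$, $\Hat{\vect{\sigma}}_{d,n+\theta}$ and hence $\nabla\cdot\vect{u}_{h,n+\theta}$), whereas the continuity functionals in~\eqref{eqn:fe_problem_discrete} are evaluated at $\vect{U}_{h,n+1}$, i.e. in terms of $\vect{u}_{h,n+1}$; combining them naively leaves an uncontrolled remainder proportional to $\brac{1-\theta}\sum_{K}\int_{K}\nabla p_{h,n+\theta}\cdot\brac{\vect{u}_{h,n+1} - \vect{u}_{h,n}}\dif x$. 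To eliminate it I would use that the previous iterate $\vect{U}_{h,n}$ also satisfies the discrete continuity equations --- for $n \ge 1$ because it solved~\eqref{eqn:fe_problem_discrete} at the previous step, and for $n = 0$ because the discrete initial data is taken discretely solenoidal, consistent with the solenoidal $\vect{u}_{0}$. Since $F_{\rm c}$ and $\Bar{F}_{\rm c}$ are linear in $\vect{U}$, the relations $F_{\rm c}(\vect{U}_{h,n};q_{h}) = F_{\rm c}(\vect{U}_{h,n+1};q_{h}) = 0$ (and their barred analogues) let me replace, at no cost, the continuity contribution at $t_{n+1}$ by its weak form at $t_{n+\theta}$. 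The pressure terms then collapse exactly as in the passage from~\eqref{eqn:energy_1} to~\eqref{eqn:energy_2}: the volume terms combine by the divergence theorem, single-valuedness of $\Bar{p}_{h,n+\theta}$ and $\Bar{\vect{u}}_{h,n+\theta}$ annihilates the interior-facet remainder, the condition $\Gamma_{D} = \pd\Omega$ removes the $\pd\Omega$ term, and definition~\eqref{eqn:def_uhat} yields the non-negative pressure stabilization $\sum_{K}\int_{\pd K}\tfrac{\beta h_{K}}{1+\nu}\left|\Bar{p}_{h,n+\theta} - p_{h,n+\theta}\right|^{2}\dif s$.

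Collecting everything, the tested equation becomes a fully discrete analogue of~\eqref{eqn:energy_2}, in which $\tfrac{1}{2\delta t}\int_{\Omega}\brac{\left|\vect{u}_{h,n+1}\right|^{2} - \left|\vect{u}_{h,n}\right|^{2}}\dif x$ equals minus the advective, viscous and pressure dissipation terms and minus $\tfrac{\theta - 1/2}{\delta t}\int_{\Omega}\left|\vect{u}_{h,n+1} - \vect{u}_{h,n}\right|^{2}\dif x$. All dissipation terms are non-negative for suitably large $\alpha$, and the remaining time term is non-negative precisely for $\theta \ge 1/2$, giving~\eqref{eqn:discrete_energy_stability}. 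I expect the time-level shift of the continuity constraint in the third step --- and the accompanying need to record the inductive discrete solenoidality of $\vect{U}_{h,n}$ --- to be the only delicate point, as the advective and viscous manipulations transcribe directly from the semi-discrete proof.
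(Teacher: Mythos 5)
Your proposal is correct and follows essentially the same route as the paper: your ``time-level shift'' of the continuity constraint (using that $\vect{U}_{h,n}$ satisfies the discrete continuity equations together with linearity of $F_{\rm c}$, $\Bar{F}_{\rm c}$ in $\vect{U}$) is exactly what the paper does by testing with $q_{h} = -\theta p_{h,n+\theta}$, $\Bar{q}_{h} = -\theta \Bar{p}_{h,n+\theta}$ and adding the vanishing previous-step residuals $F_{\rm c}(\vect{U}_{n}; -(1-\theta)p_{n+\theta})$ and $\Bar{F}_{\rm c}(\vect{U}_{n}; -(1-\theta)\Bar{p}_{n+\theta})$, which by linearity assembles the midpoint continuity identity you invoke. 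The remaining ingredients --- the identity expressing $\int_{\Omega} \brac{\vect{u}_{h,n+1}-\vect{u}_{h,n}} \cdot \vect{u}_{h,n+\theta} \dif x$ as the kinetic-energy difference plus the non-negative $\brac{\theta - 1/2}$ term, and the transcription of the advective, viscous and pressure-stabilization estimates from the semi-discrete proof with the advecting velocity frozen at $t_{n}$ --- coincide with the paper's argument.
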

\begin{proof}
For $\chi=1/2$, setting
$\vect{v}_{h} = \vect{u}_{h, n+\theta}$,
$\Bar{\vect{v}}_{h} = -\Bar{\vect{u}}_{h, n+\theta}$
$q_{h} = -\theta p_{h, n+\theta}$
and
$\Bar{q}_{h} = - \theta \Bar{p}_{h, n+\theta}$
in~\eqref{eqn:fe_problem_discrete}, and adding to \eqref{eqn:fe_problem_discrete}
$F_{c}(\vect{U}_{n}; -(1-\theta) p_{n+\theta})$
and $\Bar{F}_{c}(\vect{U}_{n}; -(1-\theta) p_{n+\theta})$,
\begin{multline}
   \int_{\Omega} \frac{\vect{u}_{h, n+1} - \vect{u}_{h, n}}{\delta t} \cdot \vect{u}_{h, n+\theta} \dif x
  + \sum_{K} \int_{\pd K} \brac{\frac{1}{2}-\lambda} \brac{\Hat{\vect{u}}_{h, n} \cdot \vect{n}}
             \left| \Bar{\vect{u}}_{h, n+\theta} - \vect{u}_{h, n+\theta} \right|^{2} \dif s
\\
  - \sum_{K}\int_{K} \vect{\sigma}_{d, h, n+\theta} \colon \nabla \vect{u}_{h, n+\theta} \dif x
  - \sum_{K} \int_{\pd K} \Hat{\vect{\sigma}}_{d, h, n+\theta} \vect{n}
        \cdot \brac{\Bar{\vect{u}}_{h, n+\theta} - \vect{u}_{h, n+\theta}} \dif s
\\
  + \sum_{K} \int_{\pd K} 2\nu \brac{\Bar{\vect{u}}_{h, n+\theta}
                  - \vect{u}_{h, n+\theta}} \cdot \nabla^{s} \vect{u}_{h, n+\theta} \vect{n} \dif s
\\
+  \int_{\Gamma_{N}} \brac{\frac{1}{2} - \lambda} \brac{\vect{\Bar{u}}_{h, n} \cdot \vect{n}}
             \left| \Bar{\vect{u}}_{h, n+\theta} \right|^{2}\dif s
 - \sum_{K} \int_{K} \vect{u}_{h, n+\theta} \cdot \nabla p_{h, n+\theta} \dif x
\\
 - \sum_{K} \int_{\pd K} \Hat{\vect{u}}_{h, n+\theta} \cdot \vect{n} \brac{\Bar{p}_{h, n+\theta} - p_{h, n+\theta}} \dif s
 + \int_{\pd \Omega} \Bar{\vect{u}}_{h, n+\theta} \cdot \vect{n} \Bar{p}_{h, n+\theta} \dif s
  = 0.
\end{multline}
Taking into account that
\begin{equation}
  \vect{u}_{h, n+\theta} = \delta t \brac{\theta - \frac{1}{2}}
  \frac{\vect{u}_{h, n+1} - \vect{u}_{h, n}}{\delta t}
+  \frac{\vect{u}_{h, n+1} + \vect{u}_{h, n}}{2}
\end{equation}
leads to
\begin{multline}
  \int_{\Omega} \frac{\vect{u}_{h, n+1} - \vect{u}_{h, n}}{\delta t} \cdot \vect{u}_{h, n+\theta} \dif x
  = \brac{\theta - \frac{1}{2}} \int_{\Omega} \frac{| \vect{u}_{h, n+1} - \vect{u}_{h, n} |^{2}}{\delta t} \dif x
\\
  + \frac{1}{2} \int_{\Omega} \frac{| \vect{u}_{h, n+1} |^{2}}{\delta t} \dif x
  - \frac{1}{2} \int_{\Omega} \frac{| \vect{u}_{h, n} |^{2}}{\delta t} \dif x.
\end{multline}
Following the same steps as in Proposition~\ref{prop:energy_stability}
proves that \eqref{eqn:discrete_energy_stability} holds when $\theta
\ge 1/2$ and for sufficiently large $\alpha$.
\end{proof}

\section{Algorithmic aspects}
\label{sec:algorithmic-aspects}

The fully-discrete finite element problem
in~\eqref{eqn:fe_problem_discrete} can lead to an efficient numerical
implementation in which the functions on cells ($\vect{u}_{h, n+1}$ and
$p_{h, n+1}$) are eliminated cell-wise in favor of the functions that live
only on cell facets ($\Bar{\vect{u}}_{h, n+1}$ and $\Bar{p}_{h, n+1}$) via
static condensation.  Key to this algorithmic feature is
that functions on cells are not linked directly across cell facets,
in contrast with conventional discontinuous Galerkin methods. Rather,
functions on neighboring cells communicate via the functions defined
only on cell facets. Moreover, if the functions $\Bar{\vect{u}}_{h}$ and
$\Bar{p}_{h}$ are chosen to be continuous, the method will result in the
same number of global degrees of freedom as for a continuous method on the
same mesh (if interior degrees of freedom are eliminated from a continuous
method via static condensation). Despite having the same number of global
degrees offreedom as a continuous method, stabilizing mechanisms that are
typical of discontinuous Galerkin methods are naturally incorporated.
Moreover, for the advection-diffusion equation, it has been proved that the
approach has the same stability properties as classical upwinded discontinuous
Galerkin methods~\citep{Wells:2010}.  More detailed discussions on
algorithmic aspects can be found in Refs.~\citep{Labeur:2007,Labeur:2009}.

\section{Examples}
\label{sec:examples}

We present now a number of examples in support of the analysis
presented in the preceding sections.  The computer code used to
compute the examples presented in this section has been generated
automatically from expressive input using tools from the FEniCS
Project (\url{http://www.fenicsproject.org}) \citep{logg:2009}.
Specifically, an expressive domain-specific language for finite element
variational statements in combination with automated code generation
\citep{kirby:2006,oelgaard:2010,oelgaard:2008} and a programmable
environment has been used \citep{logg:2009}.  The computer code used
for all examples presented in this work is available under a GNU
public license in the supporting material~\citep{supporting_material}.

All examples use triangular elements, uniform partitionings and
continuous facet functions, that is
$\Bar{\vect{u}}_{h} \in \Bar{V}_{h}^{\star}$ and $\Bar{p}_{h}
\in \Bar{Q}_{h}^{\star}$.
When computing errors, analytical solutions which are polynomial are
represented exactly. Otherwise analytical solutions are interpolated
using eighth-order Lagrange basis functions on the same mesh. Exact
quadrature is used in all cases.
\subsection{Stokes flow with source}
\label{sec:stokes-with-source}

We consider a Stokes problem (by neglecting the momentum advection
terms) with $\nu = 1$ on a unit square with $\vect{u} = \vect{0}$ on
$\pd \Omega$. The source term $\vect{f}$ is chosen such that the exact
solution is:
\begin{equation}
  \begin{split}
    u_x &=  x^2\brac{1 - x}^2\brac{2y - 6y^2 + 4y^3},
    \\
    u_y &= -y^2\brac{1 - y}^2\brac{2x - 6x^2 + 4x^3},
    \\
    p   &= x\brac{1 - x}.
  \end{split}
\label{eqn:stokes-exact}
\end{equation}
The constraint $\int_\Omega p \dif x =1/6$ is
enforced by means of a Lagrange-multiplier, matching the solution
in~\eqref{eqn:stokes-exact}.

We investigate convergence rates in the $L^{2}$-norm for the pressure
and the velocity fields using equal-order polynomial elements
($k = \Bar{k} = m = \Bar{m}$). Polynomial orders ranging from one
to five are considered. These results complement those presented
in \citet{Labeur:2007} for the same boundary-value problem, but
in which the pressure field was continuous and only a polynomial
order of one was considered.  We set $\alpha=6 k^2$, based on
observations for higher-order elements~\citep{Wells:2010}, and use
$\beta = 10^{-4}$.  The observed convergence behavior is presented in
Figure~\ref{fig:stokes_u-p}. Standard convergence rates of order $k+1$ for
the velocity field and of order $k$ for the pressure field are observed.
\begin{figure}
  \center
  \begin{tabular}{c}
  \includegraphics[width=0.75\textwidth]{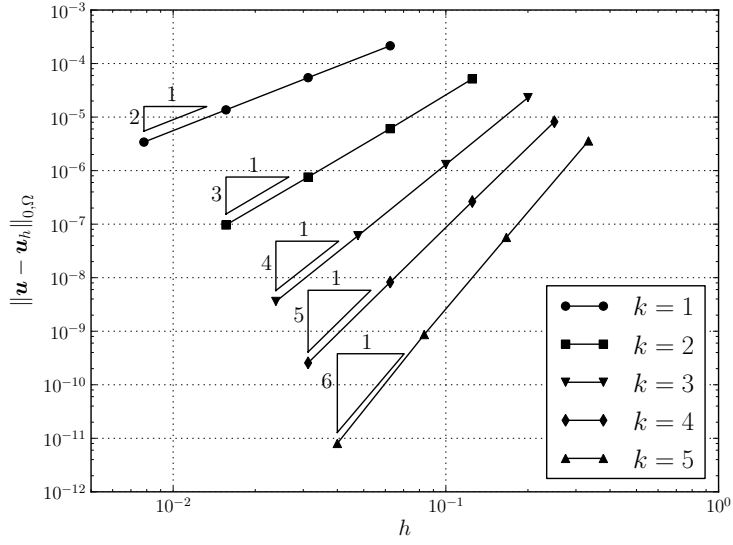}
  \\ (a) \\[1ex]
  \includegraphics[width=0.75\textwidth]{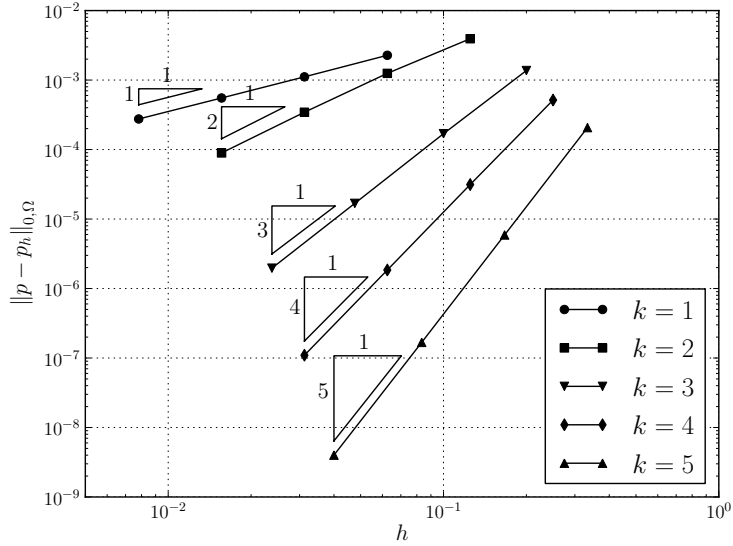}
  \\ (b)
 \end{tabular}
\caption{Stokes flow: computed $L^{2}$ errors in (a) velocity and (b) pressure
         with $h$-refinement and for various polynomial orders $k$
         ($\alpha=6k^2$ and $\beta=10^{-4}$).}
\label{fig:stokes_u-p}
\end{figure}
The error in the divergence of the velocity field is examined via
\begin{equation}
  e_{\rm div}
  \colonequals
  \brac{\sum_K \int_{K} \brac{\nabla \cdot \vect{u}}^2 \dif x }^{1/2},
\label{eqn:div-norm}
\end{equation}
and the computed $e_{\rm div}$ is shown in
Figure~\ref{fig:convergence-run-div} for various polynomial orders and
$h$-refinement.
\begin{figure}
  \center
  \includegraphics[width=0.75\textwidth]{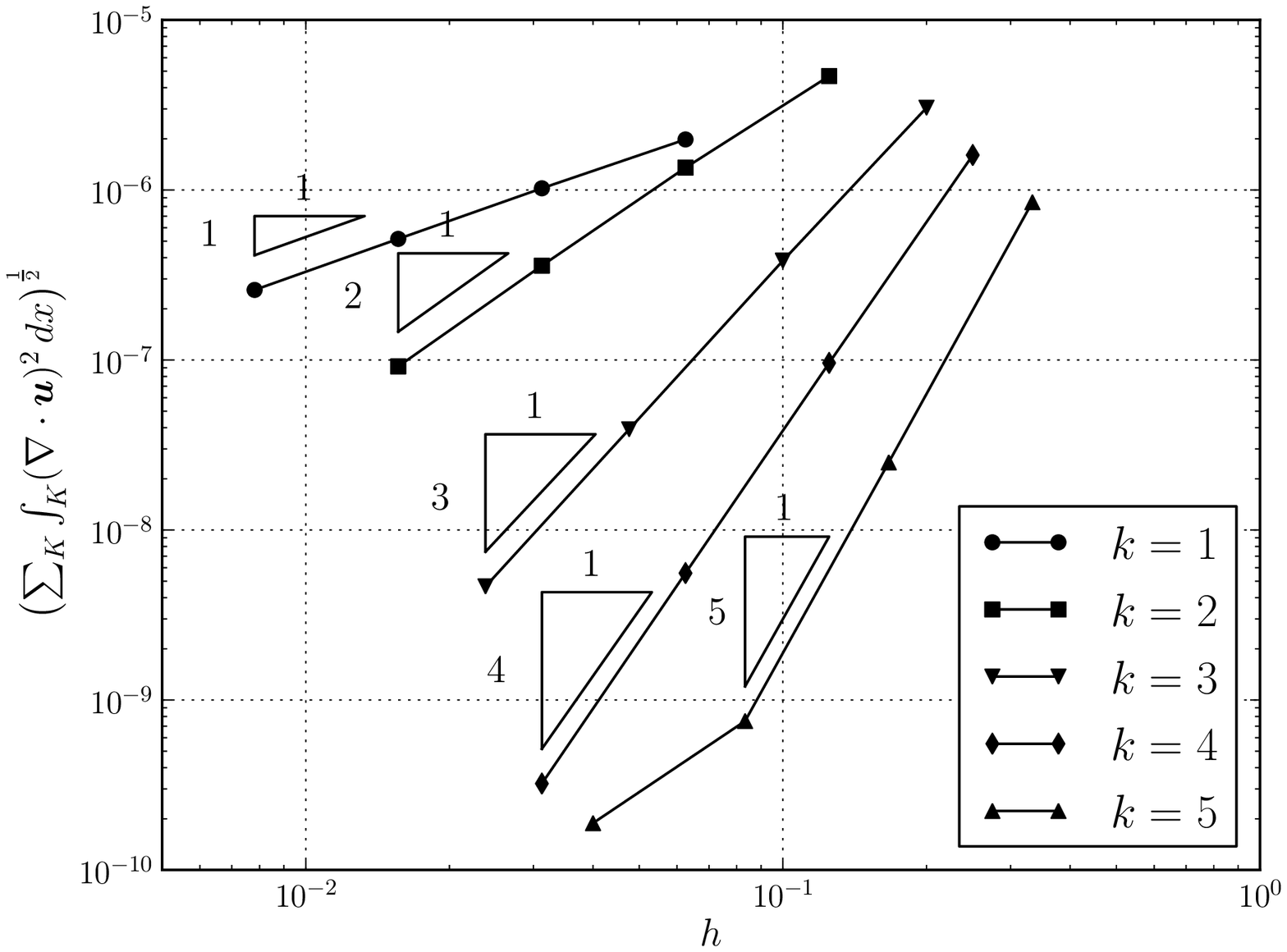}
\caption{Stokes flow with discontinuous pressure: divergence error
      with $h$-refinement and various polynomial orders $k$
      ($\alpha = 6k^2$ and $\beta = 10^{-4}$).}
\label{fig:convergence-run-div}
\end{figure}
Clearly, the divergence error is small.  For comparison, the
divergence errors using the same method for the velocity field, but
with a continuous pressure field \citep{Labeur:2007} are shown in
Figure~\ref{fig:convergence-run-div-cont}.  The observed convergence
rates are similar to those for the discontinuous pressure case.  However,
particularly for the lower-order elements, the divergence error is
significantly greater in the continuous pressure case.
\begin{figure}
  \center
  \includegraphics[width=0.75\textwidth]{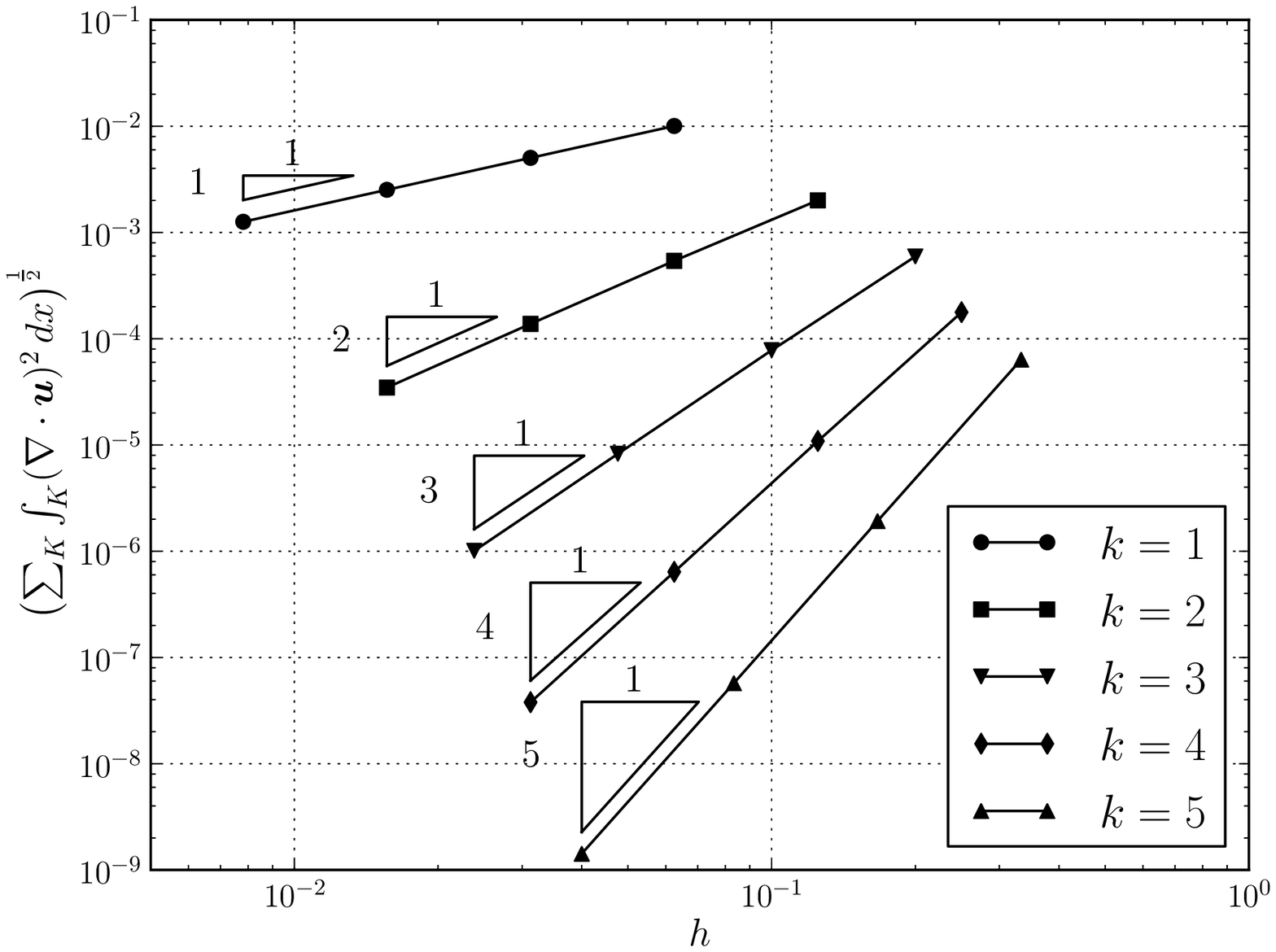}
\caption{Stokes flow with continuous pressure: divergence error
      with $h$-refinement and various polynomial orders $k$
      ($\alpha=8k^2$).}
\label{fig:convergence-run-div-cont}
\end{figure}

For comparison, we consider a Taylor--Hood element with a
continuous piecewise-quadratic velocity field and a continuous
piecewise-linear pressure field, and an element constituted of a
continuous piecewise-quadratic velocity field, enriched by cubic bubble
functions, and a discontinuous piecewise-linear pressure field. The latter
approach is referred to by some authors as the Crouzeix--Raviart method
(e.g.~\citep{Gresho:1998,Elman:2005}), a convention that we adopt here.
For our method we use corresponding polynomial orders of $k =\Bar{k} =
2$ for the velocity and $m = \Bar{m} =1$ for the pressure and penalty
parameters $\alpha=6 k^2$ and~$\beta = 0$. Recall that it is permitted
to use $\beta = 0$ in this case since the polynomial degree of the
pressure field is lower than the polynomial degree of the velocity field,
see Section~\ref{sec:cont_eq}.  The observed convergence
behavior is presented in Figure~\ref{fig:stokes_u-p-combined}, which
shows the expected convergence rates for the Taylor--Hood and
Crouzeix--Raviart methods, and with the method formulated in this
work showing the same rates, which are also the same as
for the $k = \Bar{k} = m = \Bar{m} = 2$ case presented
in Figure~\ref{fig:stokes_u-p}.
\begin{figure}
  \center
  \begin{tabular}{c}
  \includegraphics[width=0.75\textwidth]{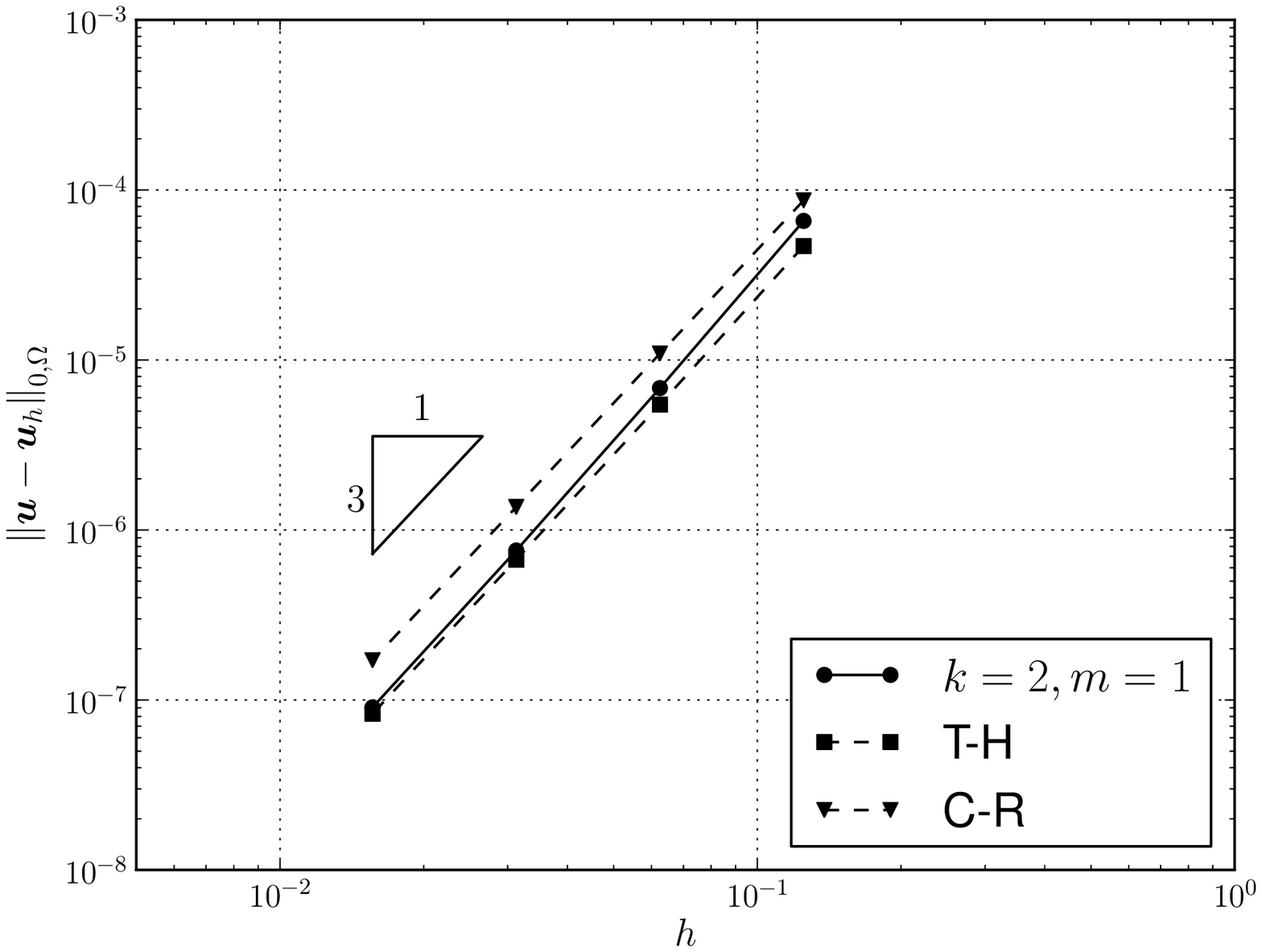}
  \\ (a) \\[1ex]
  \includegraphics[width=0.75\textwidth]{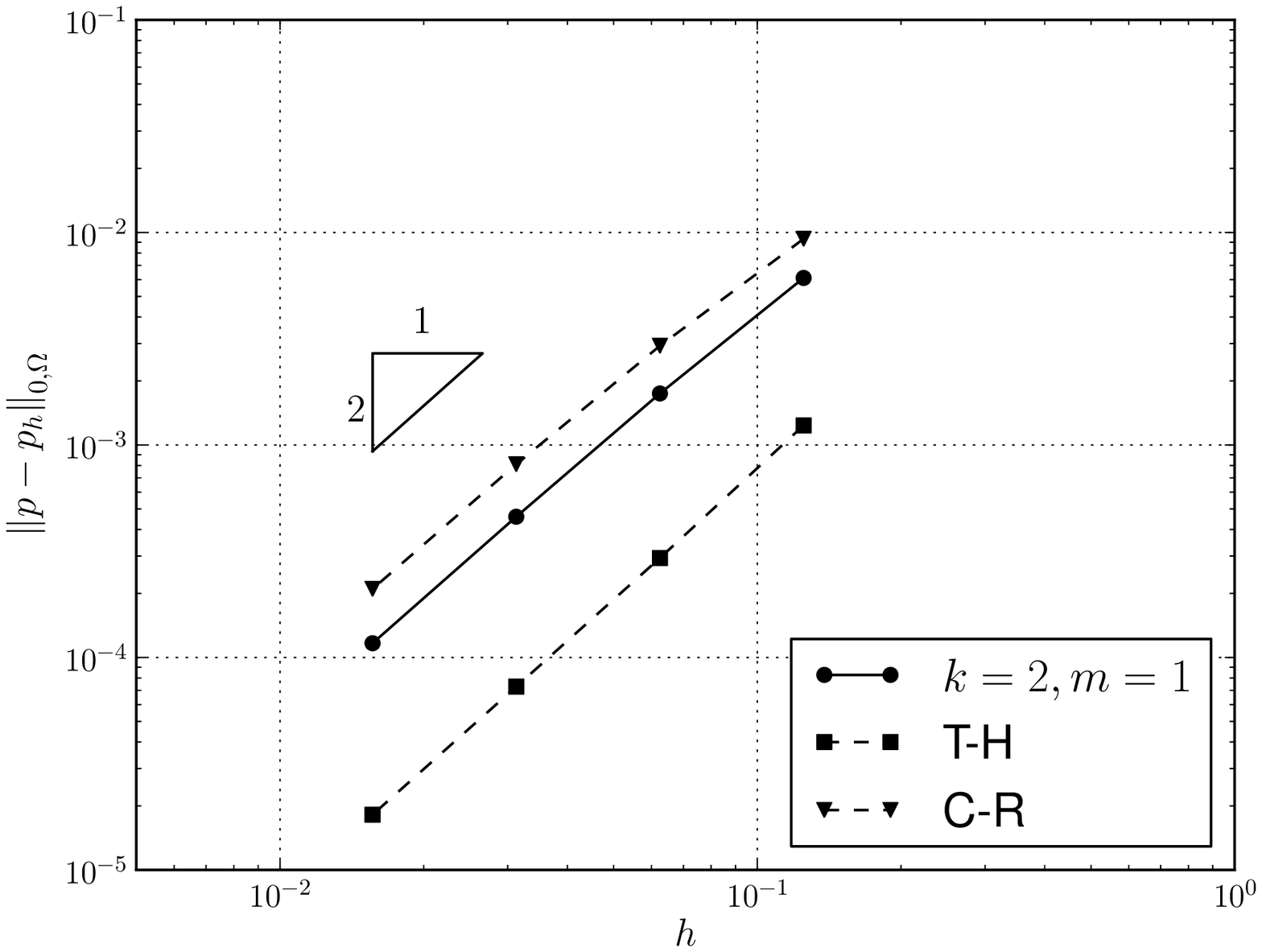}
  \\ (b)
 \end{tabular}
\caption{Stokes flow: computed $L^{2}$ errors in (a) velocity and (b) pressure
         with $h$-refinement for polynomial orders $k=\Bar{k}=2$ and $m=\Bar{m}=1$
         ($\alpha = 6k^2$ and $\beta=0$), compared with Taylor--Hood (T-H) and
         Crouzeix--Raviart (C-R) methods.}
\label{fig:stokes_u-p-combined}
\end{figure}
The divergence error, measured by $e_{\rm div}$, is shown in
Figure~\ref{fig:stokes-div-error-compare}.
\begin{figure}
  \center
  \includegraphics[width=0.75\textwidth]{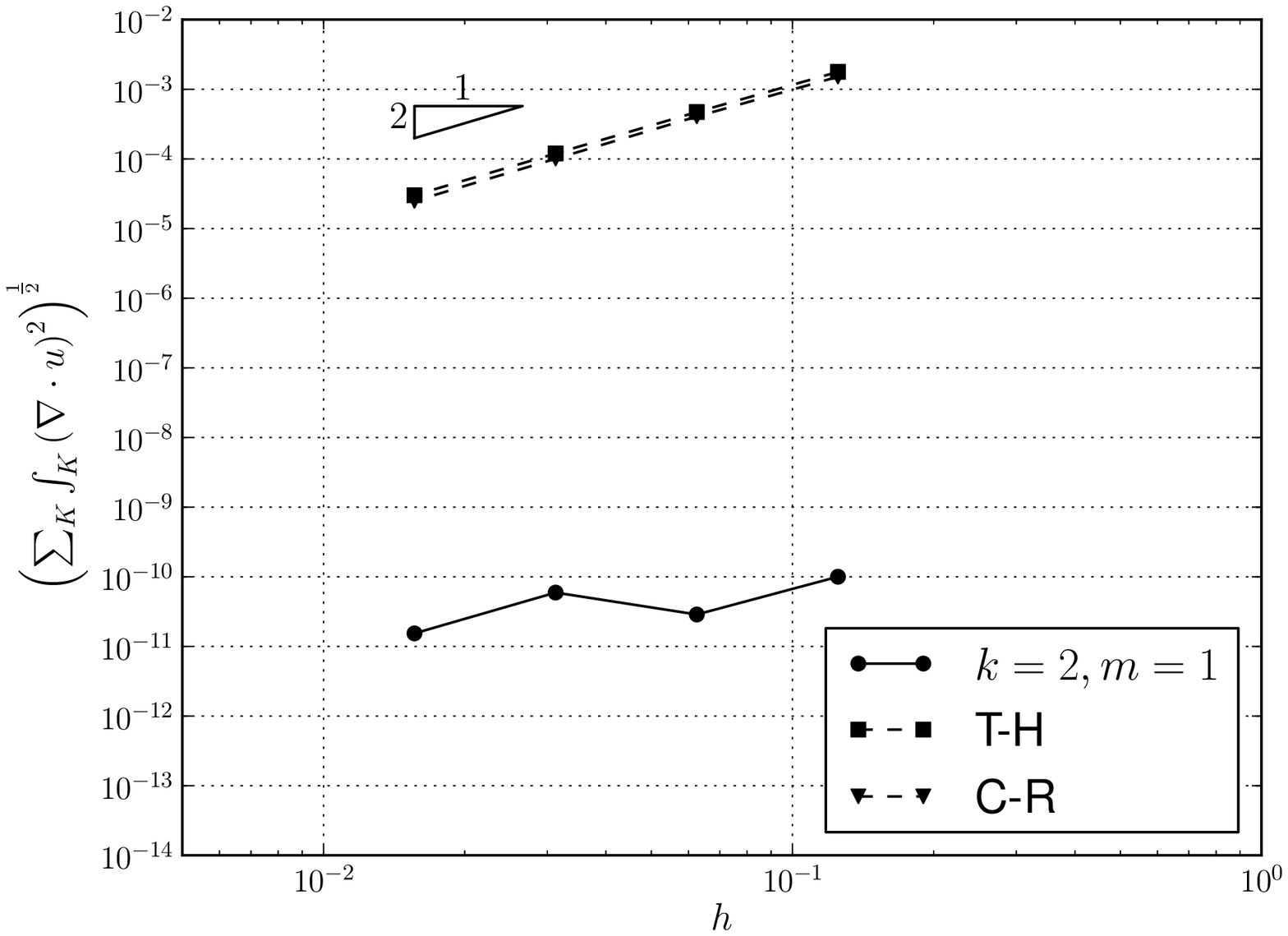}
\caption{Stokes flow: divergence error with $h$-refinement
         for polynomial orders $k=\Bar{k}=2$ and $m=\Bar{m}=1$
         ($\alpha=6k^2$ and $\beta=0$), compared with Taylor--Hood (T-H) and
         Crouzeix--Raviart (C-R) methods.}
\label{fig:stokes-div-error-compare}
\end{figure}
We note that while the Crouzeix--Raviart method conserves mass locally,
the divergence error when measured in $e_{\rm div}$ is very close to that
of the Taylor--Hood method, whereas for our method the divergence error
is effectively zero.

These convergence results demonstrate that in the viscous limit the
equal-order pressure approximation is sub-optimal from the viewpoint
of accuracy. However, in advection dominated flows the simultaneous
satisfaction of Proposition~\ref{prop:momentum_conservation} (momentum
conservation) and Proposition~\ref{prop:energy_stability} (energy
stability) relies on a sufficiently rich pressure field relative
to the velocity field, i.e. $k \leq m$, $\Bar{k} \leq \Bar{m}$, and
consequently~$\beta > 0$.

\subsection{Kovasznay flow}

We now consider the incompressible Navier--Stokes equations by examining
the following analytical solution due to \citet{Kovasznay:1948}:
\begin{equation}
  \begin{split}
    u_x &= 1 - e^{\lambda x} \cos \brac{2\pi y},
    \\
    u_y &=  \frac{\lambda}{2 \pi} e^{\lambda x} \sin \brac{2\pi y},
    \\
    p   &= \frac{1}{2} \brac{1- e^{2 \lambda x}} + C,
  \end{split}
\label{eqn:kovasnay-exact}
\end{equation}
where $C$ is an arbitrary constant and
\begin{equation}
 \lambda = \frac{Re}{2} - \brac{\frac{Re^2}{4} + 4 \pi^2}^{1/2},
\end{equation}
where $Re$ is the Reynolds number. The solution represents laminar flow
in the wake of a grid, see also~\citep{Kirby:2006b,Warburton:2000}.

We use a rectangular domain $\Omega := \left\{ \brac{x,y} \in
\brac{-0.5,1} \times \brac{-0.5, 1.5} \right\}$.  On $\pd \Omega$
Dirichlet boundary conditions for the velocity are specified according to
equation~\eqref{eqn:kovasnay-exact}. The pressure is prescribed in the
lower-left corner of the domain.  Equal-order polynomial elements are
used ($k = \Bar{k} = m = \Bar{m}$) with polynomial orders ranging from
one to five. The parameters $\chi = 1/2$, $\alpha = 6 k^2$ and $\beta =
10^{-4}$ are used.  We solve the stationary problem using a fixed point
iteration with stopping criterion
\begin{equation}
 \frac{\left| e_u^{i+1}-e_u^i \right|}{e_u^{i+1}+e_u^i} \leq {\rm TOL},
\end{equation}
where $e_u^{i}$ and $e_u^{i + 1}$ are the $L^2$ velocity error norms,
relative to the exact solution, of the consecutive iterates $i$ and
$i+1$, respectively, and ${\rm TOL}$ is a given tolerance which is set
to~$10^{-4}$.

For $Re = 40$, the observed convergence rates in the
$L^{2}$-norm for the velocity and pressure fields are presented
in Figure~\ref{fig:kovasnay_u-p}.  Convergence rates of order $k +
1$ for the velocity field and of order $k$ for the pressure field are
observed. It was verified that these convergence results also hold for
$\chi=0$ (advective scheme) and $\chi=1$ (conservative scheme).
\begin{figure}
  \center
  \begin{tabular}{c}
  \includegraphics[width=0.75\textwidth]{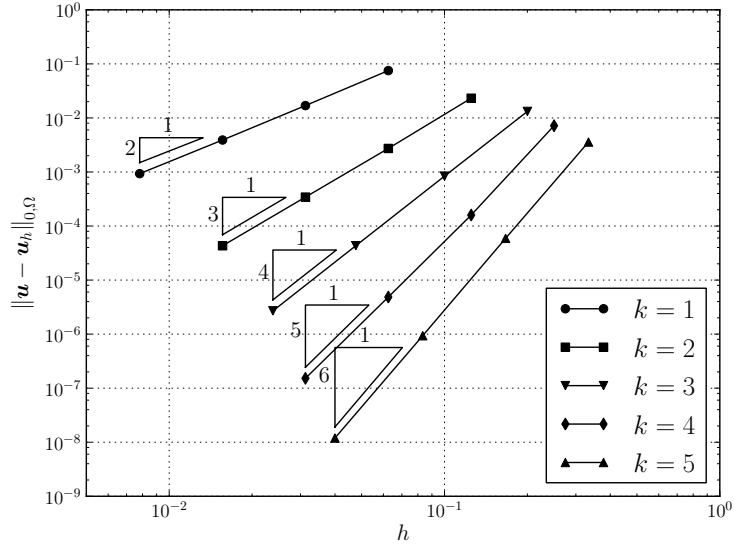}
  \\ (a) \\[1ex]
  \includegraphics[width=0.75\textwidth]{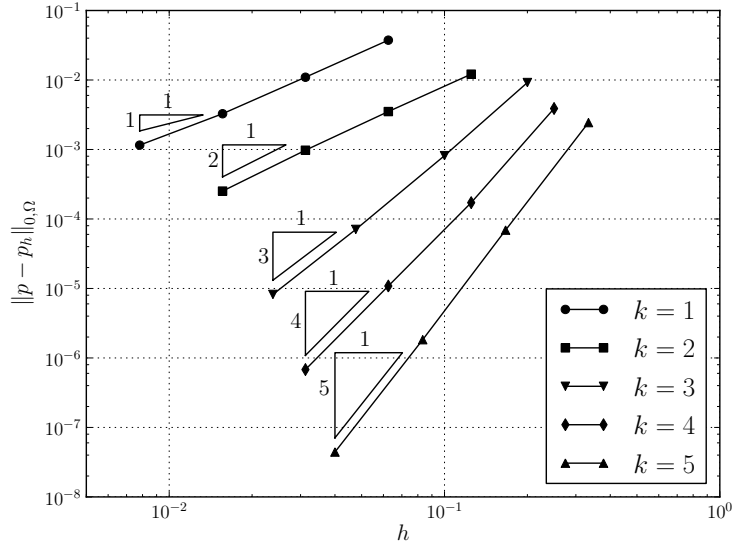}
  \\ (b)
 \end{tabular}
\caption{Kovasznay flow ($Re=40$): computed $L^{2}$ errors in (a)
 velocity and (b) pressure with $h$-refinement and for various
 polynomial orders $k$ ($\alpha=6k^2$, $\beta=10^{-4}$ and $\chi=1/2$).}
\label{fig:kovasnay_u-p}
\end{figure}

\subsection{Backward-facing step flow}

The next example concerns stationary two-dimensional flow over a
backward-facing step.  Figure~\ref{fig:backstep-geometry} presents the
set-up of the problem.  The step height $S$ is equal to half the height
of the main channel height $D$ and the velocity profile in the inflow
channel is parabolic with maximum velocity~$U_{\max}$.
\begin{figure}
  \center
  \includegraphics[width=0.6\textwidth]{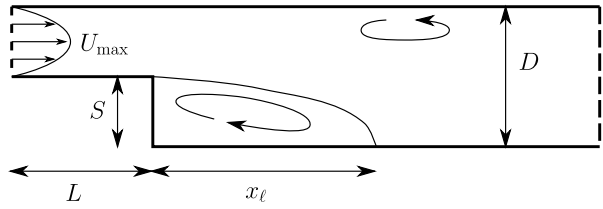}
\caption{Backward-facing step: general set-up.}
\label{fig:backstep-geometry}
\end{figure}
Behind the step a recirculation zone develops with the re-attachment
length $x_{\ell}$ depending on the Reynolds-number, as investigated
experimentally by \citet{Armaly:1983}. The Reynolds number is defined as
\begin{equation}
 Re = \frac{UD}{\nu},
 \label{eqn:Re-backstep}
\end{equation}
where $U$ is two-thirds of the maximum inflow velocity and $\nu$ is the
kinematic viscosity~\cite{Armaly:1983}.

The numerical test concerns the comparison of experimental
and computed values of the dimensionless reattachment length
$x_{\ell}/S$~\cite{Armaly:1983}.  We consider zero step length ($L=0$),
and a rectangular computational domain $\Omega := \left\{ \brac{x,y}
\in \brac{0,15} \times \brac{0, 1} \right\}$ which extends from the step
over a length of $30$ times the step height in the downstream direction.
The domain is partitioned using $301$ vertices in $x$-direction and 31
vertices in $y$-direction.  On the left boundary ($x=0$) the parabolic
velocity profile with $U_{\max}=1$ is imposed for $1/2 \leq y \leq 1$
using a Dirichlet boundary condition.  Along the outflow boundary ($x=15$)
a homogeneous Neumann boundary condition for the velocity is used.
On all other boundaries $\bar{\vect{u}} = \vect{0}$.  The pressure
degree of freedom for $\Bar{p}$ in the lower-left corner of the domain
is fixed. Using the definition in~\eqref{eqn:Re-backstep}, the maximum
inflow velocity and kinematic viscosity are adjusted to obtain a range of
Reynolds numbers between $100$ and~$800$. Equal order polynomials are used
($k = \Bar{k} = m = \Bar{m}$) with $k = 1$ and $k = 2$ considered. The
parameters $\chi=1/2$, $\alpha=6 k^2$ and $\beta = 10^{-4}$ are used.
We solve the stationary problem using a fixed point iteration with a
stopping criterion based on the $L^2$-norm of the velocity,
\begin{equation}
 \frac{\norm{\vect{u}_h}_{0, \Omega}^{i+1}
  - \norm{\vect{u}_h}_{0, \Omega}^{i}}
  {\norm{\vect{u}_h}_{0, \Omega}^{i+1}
  + \norm{\vect{u}_h}_{0, \Omega}^{i}} \leq {\rm TOL},
\end{equation}
where $i$ and $i + 1$ denote successive iterates and ${\rm TOL}$
is a tolerance, which is set to~$10^{-6}$.

The computed dimensionless reattachment lengths for various Reynolds
numbers are presented in Figure~\ref{fig:backstep-results}, and are
compared against measured data~\cite{Armaly:1983}.
\begin{figure}
  \center
  \includegraphics[width=0.75\textwidth]{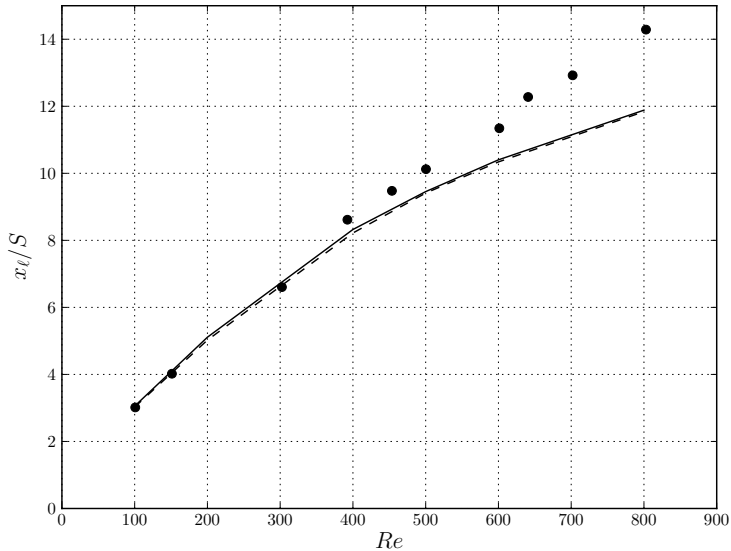}
\caption{Backward-facing step: comparison of measured ($\bullet$) and
 computed reattachment lengths for polynomial orders of $k=1$ (solid)
 and $k=2$ (dashed), experimental data from~\cite{Armaly:1983}.}
\label{fig:backstep-results}
\end{figure}
For $Re < 400$ the computed results are in good agreement with the
results obtained from the experiments. For $Re > 400$ the computed
results gradually deviate from the measurements, in a similar way as the
results computed by~\citet{Kim:1985}, and which can be attributed to the
emergence of three-dimensional flow structures~\citep{Armaly:1983}.
The computed streamlines for $Re = 800$ and polynomial orders of
one are shown in Figure~\ref{fig:backstep-streamlines}. The computed
streamlines involve a secondary recirculation bubble which resides
between dimensionless distances of $10.4$ and $20.1$ from the step,
which is in good agreement with observed values of $11.2$ and $19.6$,
respectively~\cite{Armaly:1983}.
\begin{figure}
  \center
  \includegraphics[width=0.95\textwidth]{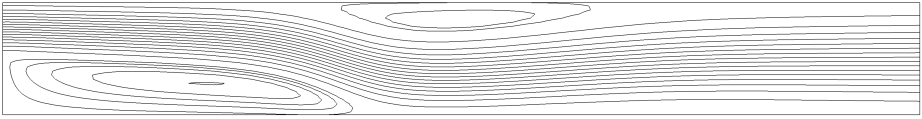}
\caption{Backward-facing step: computed streamlines for $Re=800$ and $k=1$,
 stream function intervals $0.2$ (main flow), $0.005$ (recirculation zone)
 and $0.002$ (secondary bubble), respectively,
 figure stretched by a factor of two in the cross-stream direction.}
\label{fig:backstep-streamlines}
\end{figure}

\subsection{Chaotic advection}

We now consider the energy stability properties of the method
for the incompressible Navier--Stokes equations with $\chi = 1/2$.
The incompressible Navier--Stokes equations are solved on the unit square
with zero viscosity and boundary conditions $\vect{u} \cdot \vect{n} = 0$
and $\vect{h} \cdot \vect{s} = 0$ on $\pd \Omega$, where $\vect{s} \cdot
\vect{n} = 0$. This corresponds to impermeable free-slip boundaries.
The pressure is prescribed to be zero at a point in the domain. The
initial condition $\vect{u}_{0} = \vect{0}$ is used. A time step $\delta
t = 0.2$ is adopted and the mesh has 32 cell vertices along each axis.
To create a chaotic velocity field, in the first simulation step a random
forcing term $\vect{f}$ is used.  Uniform random variables are generated
at vertices such that for each component of the forcing vector $f_{i}
\in [-1, 1]$.  This field is then interpolated using linear Lagrange
finite element basis functions.  For the first step, $\nu = 1 \times
10^{-5}$, after which it is set to zero.  This is done to start the
simulation, since with $\vect{u}_{0} = \vect{0}$, if $\nu = 0$ then
\eqref{eqn:fe_problem_discrete} cannot be solved as the advective
velocity at $t = 0$ is zero.  For the first $5$ time steps, a backward
Euler scheme is used ($\theta = 1$) to damp oscillations due to the
discontinuous nature of the forcing term. After the first $5$ steps,
$\theta = 1/2$ is used.

The relative change in the total kinetic energy between steps once $\theta
= 1/2$ is presented in Figure~\ref{fig:total_Ek} for the case of linear
basis functions ($k = 1$) for all fields and the case of quadratic basis
functions ($k = 2$) for all fields.  Consistent with the analysis, the
kinetic energy is observed to decrease monotonically. Not unexpectedly,
the relative dissipation is smaller for the $k=2$ case.
\begin{figure}
  \center
  \includegraphics[width=0.75\textwidth]{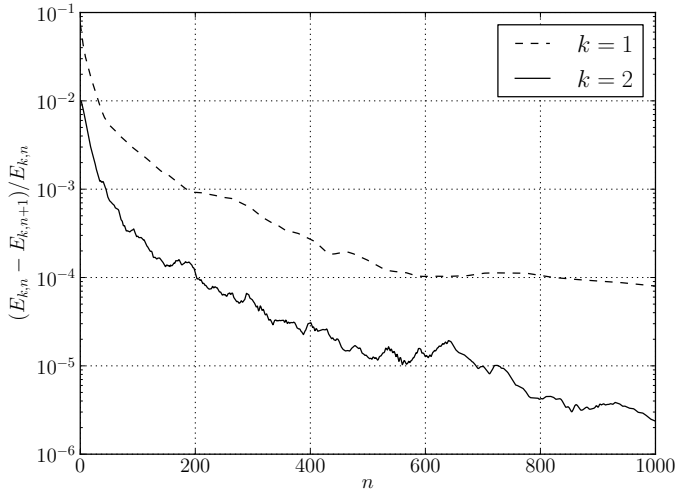}
\caption{Relative change in total kinetic energy
between time steps for the incompressible Navier--Stokes test
with $\chi = 1/2$, $\theta = 1/2$ and $\nu = 0$ for
linear ($k=1$) and quadratic ($k=2$) elements. In both cases, a mesh with
$32 \times 32$ vertices is used.}
\label{fig:total_Ek}
\end{figure}

\section{Conclusions}

A generalization of a hybrid method that inherits attractive properties
of continuous and discontinuous Galerkin methods has been presented and
analyzed for the incompressible Navier--Stokes equations.  The method
incorporates upwinding of the advective momentum flux naturally,
it is observed to be stable for equal-order velocity/pressure basis
functions and it has very good local mass conservation properties.
These properties, usually associated with discontinuous Galerkin methods,
can be achieved with the same number of global degrees of freedom as a
continuous Galerkin method on the same mesh, thereby obviating the common
criticism of discontinuous Galerkin methods that the number of degrees
of freedom is too large compared to continuous methods.  In contrast with
our earlier work, the new formulation presented here involves a pressure
field that is discontinuous across cell facets. This has implications for
local mass conservation, which in the presented formulation is guaranteed
in terms of the numerical flux.  It is shown that with appropriately
chosen (equal order) function spaces the method conserves momentum.
Moreover, the new formulation presented in this work uses a skew-symmetric
form of the momentum advection term.  It has been shown that this, in
combination with a suitable time integration scheme, guarantees that the
global kinetic energy will decay monotonically, even if the velocity
field is not point-wise divergence-free. The properties of the method
that have been demonstrated by analysis are supported by numerical
examples. Standard convergence rates for a range of polynomial orders
are observed in the Stokes and Navier--Stokes examples, and simulations
comparing the continuous and discontinuous pressure cases illustrate the
advantage of discontinuous pressure fields for local mass conservation.
The Navier--Stokes example concerning the flow over a backward facing
step shows that the method performs well in an advection dominated
case. The Navier--Stokes example concerning the evolution of a randomly
generated velocity field demonstrates the energy decaying property of
the skew-symmetric momentum advection term.  The complete computer code
for performing all presented numerical examples is made freely available
under an open source license as part of the supporting material.

\bibliographystyle{unsrtnat}
\bibliography{references}
\end{document}